\title{Best vs. All:\\ Equity and Accuracy of Standardized Test Score Reporting}
\newtheorem{theorem}{Theorem}
\newtheorem{defin}{ {Definition}}[section]
\newtheorem{lemma}[defin]{\bf {Lemma}}
\newtheorem{corollary}[defin]{\bf {Corollary}}
\newcommand{\remark}{{\bf Remark: }}
\newcommand{\E}{\mathbb{E}}
\newcommand{\st}{{\hbox{ s.t. }}}
\newcommand{\Xomit}[1]{}
\newcommand{\thanhcomment}{\color{purple}}
\newcommand{\ar}[1]{\textcolor{red}{[Aaron: #1]}}
\newcommand{\mn}[1]{\textcolor{blue}{[Mingzi: #1]}}
\newcommand{\sk}[1]{\textcolor{blue}{[Sampath: #1]}}
\author{Sampath Kannan\thanks{University of Pennsylvania, Department of Computer and Information Sciences} \and Mingzi Niu\thanks{Rice University, Department of Economics} \and Aaron Roth\thanks{University of Pennsylvania, Department of Computer and Information Sciences} \and Rakesh Vohra\thanks{University of Pennsylvania, Department of Economics and Electrical and Systems Engineering}}
\begin{document}
\maketitle
\begin{abstract}
We study a game theoretic model of standardized testing for college admissions. Students are of two types; High and Low. There is a college that would like to admit the High type students. Students take a potentially costly standardized exam which provides a noisy signal of their type. 

The students come from two populations, which are identical in talent (i.e. the type distribution is the same), but differ in their access to resources: the higher resourced population can at their option take the exam multiple times, whereas the lower resourced population can only take the exam once. We study two models of score reporting, which capture existing policies used by colleges. The first policy (sometimes known as ``super-scoring'') allows students to report the \emph{max} of the scores they achieve. The other policy requires that all scores be reported. 

We find in our model that requiring that all scores be reported results in superior outcomes in equilibrium, both from the perspective of the college (the admissions rule is more accurate), \emph{and} from the perspective of equity across populations: a student's probability of admission is independent of their population, conditional on their type. In particular, the false positive rates and false negative rates are identical in this setting, across the highly and poorly resourced student populations. This is the case despite the fact that the more highly resourced students can---at their option---either report a more accurate signal of their type, or pool with the lower resourced population under this policy. 
\end{abstract}

\section{Introduction}
Worldwide, standardized tests are an essential input to University admission decisions. Their use is considered to benefit both institutions and individuals. Institutions want to avoid the false-positive error of selection so as not to admit those who will not thrive. False-negative errors (the rejection of would-be successes) deprive the institution, and hence society, of potential college graduates. They are also a harm to individual students, as they  prevent a potentially successful student from attending University. 

Given the importance of standardized tests, there is an extensive literature on the fairness of such tests and how to compensate for possible unfairness. Thorndike, for example, pointed out in 1971 that the distribution of false negatives between groups and not just their absolute numbers matters. His proposed remedy involved setting two selection cutoffs with a lower cutoff for the members of the minority group \cite{thorndike}.

In this paper we consider a source of unfairness that arises from the fact that some---but not all---applicants have the resources to take the test a number of times and have discretion over what scores they submit. Until 2008, the custom was for US Universities to ask for scores from all attempts on either the SAT or the ACT and these were provided by the relevant testing companies. In 2008, the SAT introduced the superscoring option.\footnote{\url{https://www.nytimes.com/2008/12/31/education/31sat.html}} Applicant's can pick and choose which of their SAT scores to submit. In 2020, the ACT went a step further and allowed students to retake individual sections of a test (a test has multiple sections, eg, mathematics, verbal, writing, etc). Furthermore, it will include the superscore---i.e. the component-wise max---in all ACT score reports.\footnote{\url{https://www.act.org/content/act/en/new-act-options/superscoring.html}} 

In turn, many Universities have switched from a policy of requiring all scores to be reported.\footnote{In part this is driven by competitive pressure. See \url{ https://youtu.be/yFm5fAt0zIo} for discussion between Louisiana State University's VP for enrollment management and the University's Board of Supervisors held on December 05, 2019.}   At present Universities employ one of the following  policies:
\begin{enumerate}
\item  Require all scores:  applicants must submit all test scores.
\item  Score choice: the applicant is free to choose which scores to submit. Some universities encourage students to submit all scores but commit to \emph{using} only the highest reported scores, which is equivalent.
\end{enumerate}

Georgetown University is one of the shrinking number of Universities that requires all scores. In 2019, their Dean of admissions said\footnote{\url{https://www.insidehighered.com/admissions/views/2019/12/02/should-colleges-require-students-submit-every-sat-and-act-they-take}}:
\begin{quote}
 ``If you take the SAT five times and score 600-650 on verbal on four of them but 750 on one, that is useful information compared with allowing the student to cherry-pick their best score.'' 
\end{quote}

Superscoring  raises a fairness concern in that not all groups retest at the same rates. The ACT, for example, reports that the retest rates for Hispanic students is 34\%, compared to 49\% for both White and Asian students.  Students whose parents did not attend college had a retest rate of 36\% compared to 62\% for students whose highest parental education level exceeded a bachelor’s degree.\footnote{\url{https://www.act.org/content/dam/act/unsecured/documents/R1774-superscoring-subgroup-2019-07.pdf?_ga=2.200218796.349081470.1575775334-387505566.1575775334}} This is in spite of subsidies offered to low income students by the testing companies.\footnote{For a more detailed analysis see \url{https://www.act.org/content/dam/act/unsecured/documents/5195-Multiple-Testers.pdf}.}

In this paper, we asks what happens if applicants differ in their ability to access signals like standardized tests. In particular how are false negative and false positive rates affected in different populations, and how can policy improve these effects?

\subsection{Our Model and Results}

To answer this question we propose a simple model in which students are of two types, High ($H$) or Low ($L$). The probability that a student is of type $H$ is $p < 0.5$, capturing the idea that high types are scarce (if $p > 0.5$, the College would accept all students even if there was no testing). Students know their type, but cannot credibly convey it except through a test. The test generates a score $s \in \{A, B\}$, where $A$ stands for above the bar and $B$ for below the bar. The probability that an $H$ type student generates the score $A$ is $\alpha > 0.5$. Similarly, an $L$ type student generates a score of $B$ with probability $\alpha$.

Student's belong to one of two categories. Category 1 students are only able to take the test once. The proportion of category 1 students is $\phi$. Category 2 students, however can (in our baseline model) take the test up to twice.\footnote{This is the recommendation of a number of College coaching services which was determined by entering the following query into a search engine: ``How many times should I take the SAT?''.} In Section \ref{moretwo} we extend our results to the case in which category $2$ students can take the test $k$ times, for an arbitrary $k > 1$. This reflects the idea that students differ in their ability to access multiple signals. We might imagine, for example, that ``Category 2'' students come from a wealthier demographic. We emphasize that students in Category 2 can make their testing decisions \emph{adaptively} after observing the outcomes of their previous tests. So, for example, a student could decide to take the test a second time if their first score was a $B$, but to decline to retest if their first score was an $A$.

In our model, neither $p$ nor $\alpha$ depend on what category a student belongs to. Hence, category 1 and 2 students are ex-ante identical and the test itself is fair\footnote{We want to emphasize that while several sources of potential unfairness of standardized tests have been identified in the literature, we focus exclusively on unfairness arising from the choice of reporting policy.}---the only distinction between populations is the resources that they have available to be able to take the test twice (or not).

Scores on the test are submitted to a college, which will decide whether to accept or reject a student based on the scores submitted. The college receives a payoff of 1 for admitting a type $H$ student and -1 for admitting a type $L$. Changing these payoffs only affects the magnitudes of various cutoffs but not the qualitative conclusions. 
We consider two policies. The first requires that students report a single score. Hence, category 2 students can take the test up to twice and report their best score. We call this policy ``Report Max'' and it is akin to super scoring. The second policy requires students to report all scores. We call this ``Report All''.\footnote{It assumes students can't conceal scores.} The College can observe the reported test scores, but \emph{cannot} observe the Category that the student comes from. Thus, under Report Max, the College has no information at all about whether particular students come from Category 1 or Category 2. Under ``Report All'', if the College is sent a single test score, it has no information about which Category a student belongs to --- but if it receives multiple test scores, it knows the student must be from Category 2.

The two policies affect the incentives of Category 2 students only. Under ``Report Max'', it is intuitive that Category 2 students will take the test  twice and report their best score. Under ``Report All'', the trade-offs are more complicated. Category 2 type $H$ students, for example, who initially score an $A$ might choose to take the test a second time to separate themselves from Category 2 type $L$ students. Category 2 type $L$ students who score an $A$ on their first attempt face a choice. Stop and mimic Category 1 type $H$ students or take the test a second time to mimic Category 2 type $H$ students. These actions will affect the beliefs that the College forms given the reported scores, and therefore its admissions policy. 

In certain parameter regimes, both policies admit trivial equilibria of the form everyone is rejected (when $p$ is sufficiently small) or everyone is accepted  (when $p$ is sufficiently large). So, we restrict attention to non-trivial pure strategy equilibria only. 
These occur for values of $p \in [{\hat p}, 0.5)$ for some cut-off ${\hat p}$. 

For $p \in [{\hat p}, 0.5)$ there is a unique non-trivial pure strategy equilibrium under ``Report Max''. Category 2 students take the test twice and report their best score. The College accepts any student who reports an $A$ score and rejects all others. The effect is that Category 2 students have a different (better) signal distribution at their disposal, and hence the admissions statistics are biased in favor of Category 2 students, with lower false negative rates, and higher false positive rates. In other words, both low and high type students from Category 2 are more likely to be offered admissions compared to their Category 1 counterparts. 

Under ``Report All", students must report all test scores, and the equilibrium outcomes are not obvious. Category 2 students appear to continue to have an advantage; they no longer have access to a signal distribution that is clearly ``better'' --- but  they have access to a  more \emph{informative} signal distribution, which results from taking the test twice and reporting both outcomes. One might expect high types to take advantage of this. However they also now have the strategic option of attempting to pool with Category 1 students; one might expect low types to take advantage of this. What we show however, is that for  $p \in [{\hat p}, 0.5)$ there is a unique equilibrium outcome where Category 2 students have no advantage because the college accepts any student whose first (or only) score is $A$, and rejects those whose first (or only) score is $B$. The Category 2 students might take the test either once or twice --- but only the first score matters. This makes the equilibrium outcome entirely symmetric for both Category 1 and Category 2 students (since only a single score is relevant for admissions, for both types). Hence, ``Report All'' is superior in our model to ``Report Max'' from the perspective of equity: the false positive and false negative rates are equal across both categories of students. In other words, a student's probability of admissions, conditional on her type, is independent of her category.  The positive predictive value of ``Report All'' exceeds that of ``Report Max'', meaning that the admitted class will have a higher proportion of High types. And the expected payoff to the College is also higher under ``Report All'' than `Report Max'' --- and hence ``Report All'' is better not just from the perspective of student equity, but from the College's perspective as well. These results generalize to $k \geq 2$ tests as well, representing an unusual but fortunate setting in which the goals of equity and accuracy are aligned. We remark that an alternative solution would be to require that all students take the exam only once---but this would be (at best) challenging to implement, because standardized tests are administered by independent entities with their own interests, and different colleges have different admissions policies. What we show is that the traditional ``Report all'' policy has an equilibrium that has exactly the same effect as enforcing that students test only once. This equilibrium is unique when $k=2$, and even when $k > 2$, \emph{all} equilibria are strictly preferable under Report All, both from the perspective of the College, and from the perspective of equity (difference between false negative rates across populations) compared to the Report Max equilibrium.





An interesting feature of the equilibrium outcome in ``Report All'' is that while the College asks for all tests it conditions it's admissions decisions on the first test result only. This is a {\em best response} for the College given the behavior of students and not a matter of commitment. Thus, Category 1 and 2 students are treated fairly. In conditioning on the first test score only it appears that the College is throwing away information that it could have used to improve its ability to distinguish between types. It is just that {\em in equilibrium} the second test score is not informative.

There is a half century debate about the best method for treating multiple scores from the ACT, SAT, and LSAT, see \cite{boldt} for a survey. The  focus has been on the strength of the relationship between various aggregates of (average, maximum) test scores and future GPA. That focus has, as far as we are aware, continued to the present day. None  considers the possibility of simply ignoring some scores as our equilibrium analysis suggests. 

\subsection{Related Work}
Concerns for fairness in standardized testing arose the instant they were introduced. In the US this dates to 1845, when Horace Mann deployed  standardized written exams as a replacement for the existing oral examination used for public school admission in Boston. Reese \cite{reese} writes:
\begin{quote}
    ``What transpired then still sounds eerily familiar: cheating scandals, poor performance by minority groups, the narrowing of curriculum, the public shaming of teachers, the appeal of more sophisticated measures of assessment, the superior scores in other nations, all amounting to a constant drumbeat about school failure.''
\end{quote}
For a recent survey of the fairness issue that standardized testing raises see Grodsky et al \cite{grodsky}, and see Hutchinson and Mitchell \cite{hutchinson} for a retrospective of the long history of thought on fairness in standardized testing, contextualized within the current literature on fairness in machine learning.

More recently, the widespread application of statistical techniques to high stakes decision making has led to a resurgence of interest in fairness in classification and prediction. False positive and false negative rates have once again been focal measures of unfairness across populations --- see e.g.  \cite{KMR16,Cho17,HPS16}. This literature is broad---here we provide a brief overview of the most relevant subset of this literature, which uses equilibrium analysis to make predictions about policy choices. Hu and Chen \cite{HC18} consider a two stage model of a labor market, and study interventions at the first (``internship'') stage that can lead to more equitable outcomes in equilibrium (See also Coate and Loury \cite{coate1993will} and Foster and Vohra \cite{FV92} for related models of self-confirming equilibria in labor markets from the economics literature). Liu et al \cite{liu2020disparate}  consider a model of the labor market with higher dimensional signals, and study equilibrium effects of ``subsidy'' interventions which can
lessen the cost of exerting effort. Kannan, Roth, and Ziani \cite{kannan2019downstream} study a two-stage pipeline in which colleges admit students from one of two populations based on a noisy signal about their type, and then commit to a grading policy which is used by a rational downstream employer to make hiring decisions. They study how policy decisions at the level of the college affect various measures of equity at the level of the employer's hiring decisions.  Three papers \cite{milli2019social,hu2019disparate,braverman2020role} study ``strategic classification'' problems in which individuals rationally manipulate their features in response to a deployed classifier in an attempt to optimize for their own outcome, and study the equilibrium effects of different populations having different costs for manipulation.  Immorlica, Ligett, and Ziani \cite{immorlica2019access} consider the related problem of \emph{population level signalling}, in which a third party (i.e. a highschool) is able to commit to a signalling scheme for an entire population (i.e. a grading policy), in a Bayesian Persuasion like model. They show that if one population of students is associated with a high school that is able to optimally signal, but another population of students is associated with a high school that naively signals, then against a Bayesian college, inequities arise in favor of the population associated with the optimal signalling technology --- and that counter-intuitively, the introduction of an unbiased standardized test (administered uniformly across all populations, unlike in our model) can sometimes \emph{exacerbate} this issue. Jung et al. \cite{jung2020fair} study an equilibrium model of criminal justice, in which individuals from populations that differ in their access to legal employment opportunities make rational decisions about whether or not to commit crime, as a function of the criminal justice policy --- and conclude that the crime-minimizing policy should commit to ignoring demographic information so as to equalize false positive and negative rates across populations.


\Xomit{
\section{Benchmark: All Test Once}
\ar{What is the purpose of this benchmark? We should tell the reader} \ar{After reading the rest of the paper, I'm unsure of the value of this section. Perhaps just remove?}
\sk{agreed ... unless we compare it to the ReportAll and ReportMax solutions.}
In this section, we provide a benchmark in which each student tests once only. Here, the model is entirely symmetric across categories: category 2 students have no advantage over category 1 students, and the admission decision is made according to a single score. In this scenario, the students have no choices to make, so the model is not a game: the college must simply perform a Bayesian inference. After observing a score of $A$, the college updates its belief that the student is type $H$ from the prior $p$ to the posterior $$p_A = \frac{p \alpha}{p \alpha + (1-p)(1-\alpha)}.$$
After observing a score of $B$, the posterior is
$$p_B = \frac{p (1-\alpha)}{p (1-\alpha) + (1-p)\alpha}.$$
A student is accepted if the posterior is above $\frac{1}{2}$ and rejected otherwise. We can then characterize all possible cases as follows.
\begin{enumerate}
    \item If $p \leq 1-\alpha$, the College rejects regardless of the signal.
    \item If $p \in [1-\alpha, 1/2)$, the College accepts students with a high signal and rejects students with a low signal.
\end{enumerate}
}

\section{Reporting the Max Score}
It is clear that if $p$ is small enough, the College will be better off rejecting all students. This is an uninteresting outcome, and we will exclude it by assuming the $p$ is sufficiently large. Below, we identify a threshold ${\hat p}$ such that for $p < \hat{p}$ the {\em only} equilibrium under ``Report Max'' is to reject all students and we restrict attention to values of $p$ above that threshold, i.e. $p \in [\hat{p}, 0.5)$. 

We focus on a ``separating equilibrium'' in which the College accepts a student if the score is $A$ and rejects otherwise. This is the case when the test is effective in selecting students. For a narrow range of values in $[\hat{p}, 0.5)$ there is also an equilibrium where  all students are rejected. This is discussed in \textit{Section \ref{pool}}.

For convenience of exposition, in the following we write $\bar{x} \equiv 1-x$ for any variable $x \in [0,1]$.

\subsection{Separating Equilibrium}
Under the separating equilibrium, assuming it exists,  only the best score needs to be reported. Therefore a Category 2 student will take the test twice if needed to get a score of $A$. Hence, a Category 2 type $L$ student, denoted by $(2,L)$, will  report $B$ with probability $\alpha^2$. Similarly, a Category 2 type $H$ student, denoted by $(2,H)$, will report $A$ with probability $1- (1-\alpha)^2$. 
\textit{Table \ref{tab:prob_maxtest_sep}} summarizes the outcomes
\begin{table}[H]
    \centering
    \begin{tabular} {|c|c|c|c|c|} \hline
Best Score / Type & $\bm{(1,H)}$ & $\bm{(1,L)}$ & $\bm{(2,H)}$ & $\bm{(2,L)}$ \\ \hline
 $\bm{A}$ 
 & $\alpha \phi p$  &$\bar{\alpha} \phi \bar{p}$ & $(1-\bar{\alpha}^2)\bar{\phi}p$ & $(1-\alpha^2)\bar{\phi}\bar{p}$    \\ \hline
 $\bm{B}$
 & $\bar{\alpha} \phi p$ & $\alpha \phi \bar{p}$ & $\bar{\alpha}^2\bar{\phi}p$ &$\alpha^2\bar{\phi}\bar{p}$  \\ \hline
\textit{Total} & $\phi p$ & $\phi \bar{p}$ & $\bar{\phi}p$ & $\bar{\phi}\bar{p}$  \\ \hline
\end{tabular}
    \caption{\label{tab:prob_maxtest_sep}Probability Distribution with Best Test Reported [Separating Case]}
\end{table}

In the following theorem, we characterize exactly when this separating equilibrium exists under the ``Report Max'' policy:

\begin{theorem} Let  $\hat{p} =
\frac{1 + \alpha (1-\phi)}{\frac{1}{1-\alpha} + 2\alpha (1-\phi)} \in (1-\alpha, \frac{1}{2}).$
When $p \in [{\hat p}, 0.5]$, ``Report Max'' has a separating equilibrium in which the College accepts a student if the score is $A$ and rejects otherwise. Category 2 students take the test twice if they receive a score of $B$  on the first attempt. 
\end{theorem}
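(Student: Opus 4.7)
The plan is to verify both best-response conditions, derive the threshold $\hat p$ from the indifference condition, and then locate it inside $(1-\alpha, \tfrac{1}{2})$.

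\emph{Students' best response.} Category 1 students have no strategic choice. Given the conjectured College policy (admit iff the reported max is $A$), a Category 2 student whose first score is $A$ is already guaranteed admission, and since only the max is reported a second score cannot lower her chance; so stopping is weakly optimal. A Category 2 student whose first score is $B$ is rejected if she stops, and retesting yields admission with probability $\alpha$ (type $H$) or $\bar\alpha$ (type $L$), so retesting is strictly better. This pins down the joint distribution in Table~\ref{tab:prob_maxtest_sep}.

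\emph{College's best response and derivation of $\hat p$.} Summing over types in the table gives
\begin{align*}
\Pr(\mathrm{best}=A\mid H) &= \phi\alpha + \bar\phi(1-\bar\alpha^2),\\
\Pr(\mathrm{best}=A\mid L) &= \phi\bar\alpha + \bar\phi(1-\alpha^2).
\end{align*}
With payoffs $\pm 1$, accepting on $A$ is optimal iff $p\,\Pr(A\mid H)\ge\bar p\,\Pr(A\mid L)$; the indifference boundary is
\[\hat p=\frac{\Pr(A\mid L)}{\Pr(A\mid H)+\Pr(A\mid L)}.\]
Factoring $\bar\alpha$ out of the numerator via $1-\alpha^2=\bar\alpha(1+\alpha)$ and simplifying the denominator with $\alpha^2+\bar\alpha^2=1-2\alpha\bar\alpha$ yields $\hat p=\bar\alpha(1+\alpha\bar\phi)/(1+2\alpha\bar\alpha\bar\phi)$, which matches the stated form after dividing numerator and denominator by $\bar\alpha$. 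Rejecting on $B$ is automatic for $p<\tfrac12$: since $\alpha>\tfrac12$, $B$ is strictly more likely under $L$ than under $H$ in either category, so $\Pr(H\mid B)<p<\tfrac12$.

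\emph{Locating $\hat p$ in $(1-\alpha,\tfrac12)$.} A direct subtraction gives
\[\hat p-(1-\alpha)=\frac{\bar\alpha\,\alpha\,\bar\phi\,(2\alpha-1)}{1+2\alpha\bar\alpha\bar\phi},\]
which is strictly positive for $\phi<1$ since $\alpha>\tfrac12$, so $\hat p>1-\alpha$; at $\phi=1$ we recover the single-test Bayes cutoff $1-\alpha$. For the upper bound, the expression is monotone increasing in $\bar\phi$, so its maximum over $\phi\in[0,1]$ is attained at $\phi=0$, giving $(1-\alpha^2)/(1+2\alpha\bar\alpha)$; this is below $\tfrac12$ precisely because $\alpha>\tfrac12$. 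The main obstacle is the algebraic repackaging in the derivation step---everything else reduces to routine Bayesian updating and a one-variable monotonicity check.
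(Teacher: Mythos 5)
Your proof is correct and follows essentially the same route as the paper: compute the College's posteriors from Table~\ref{tab:prob_maxtest_sep}, derive $\hat p$ from the acceptance condition on $A$, and observe that rejection on $B$ follows automatically for $p\le\tfrac12$. The only additions are the explicit check of the students' best responses and the verification that $\hat p\in(1-\alpha,\tfrac12)$ via the sign of $\hat p-(1-\alpha)$ and monotonicity in $\bar\phi$ --- both of which the paper asserts without proof, so these are welcome but do not change the argument.
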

\begin{proof} From Table \ref{tab:prob_maxtest_sep}, we see that the probability that a student is of type $H$ given they report $A$ is
$$\frac{\alpha \phi p + (1-\bar{\alpha}^2)\bar{\phi}p}{\alpha \phi p  + \bar{\alpha} \phi \bar{p} +(1- \bar{\alpha}^2)\bar{\phi}p+ (1-\alpha^2)\bar{\phi}\bar{p} }.$$
If this exceeds 0.5, anyone who reports $A$ is admitted. This happens if
\begin{align*}
    \Big[\alpha \phi  + (1-\bar{\alpha}^2)\bar{\phi}\Big]p &\geq  \Big[\bar{\alpha} \phi + (1-\alpha^2)\bar{\phi}\Big](1-p)\\
   \Rightarrow p &\geq \hat{p}.
\end{align*}

From \textit{Table \ref{tab:prob_maxtest_sep}} we see that the probability of high type given a score of $B$ is
$$\frac{\bar{\alpha} \phi p + \bar{\alpha}^2\bar{\phi}p}{ \bar{\alpha} \phi p + \alpha \phi \bar{p}+\bar{\alpha}^2\bar{\phi}p+\alpha^2\bar{\phi}\bar{p}}.$$
If this is less than 0.5, then, anyone with a $B$ score is rejected. This happens if
\begin{align*}
   \Big[ \bar{\alpha} \phi  + \bar{\alpha}^2\bar{\phi} \Big]p
&\leq 
 \Big[\alpha \phi +\alpha^2\bar{\phi}\Big] (1-p),
\end{align*}
which holds true when $ p \leq 0.5 $.
\end{proof}

\remark{
Unsurprisingly, the more accurate the test, the larger the range of $p$'s for which there is a separating equilibrium. As the accuracy of the test approaches 1, one can have a separating equilibrium even if a vanishing fraction of students are of type H.}

\subsection{Reject All Equilibrium}\label{pool}
 We identify a threshold $\hat{\hat{p}}$ such that for $p \in [\hat{p}, \hat{\hat{p}}] \subset [\hat{p}, 0.5]$ there is also  a reject all equilibrium.  For $p \in (\hat{\hat{p}}, 0.5]$, the separating equilibrium is unique.
 \Xomit{
 \ar{Don't understand what is meany by ``only'' a reject all equilibrium. I thought above $\hat p$ we have the separating equilibrium. Maybe this is a typo?} 
 \mn{Agree with Aaron. The full picture should be:
 $[0,\hat{p})$ only reject all; $[\hat{p},\hat{\hat{p}})$ reject all + separating;
 $[\hat{\hat{p}}, \hat{\hat{p}}')$ only separating;
 $[\hat{\hat{p}}', \hat{p}']$ separating + accept all;
 $(\hat{p}', 1]$ accept all. And $ 0 < 1-\alpha <\hat{p}< \hat{\hat{p}}< \frac{1}{2} < \hat{\hat{p}}'<\alpha< \hat{p}' <1$.
 }.\sk{Yes, this makes sense... I made the change.}
}
 Let $f_L(B)$ be the probability that a $(2,L)$ student stops after taking one test with score $B$. Let $f_H(B)$ denote the probability that a  $(2,H)$ student stops after one test with score of $B$. \textit{Table \ref{tab:prob_maxtest}} summarizes the possible outcomes. 
\begin{table}[H]

\begin{tabular} {|c|c|c|c|c|} \hline
Best Score / Type & $\bm{(1,H)}$ & $\bm{(1,L)}$ & $\bm{(2,H)}$ & $\bm{(2,L)}$ \\ \hline
 $\bm{A}$ 
& $\alpha \phi p$  
&$\bar{\alpha} \phi \bar{p}$ 
& $\alpha \bar{\phi}p [1 + \bar{\alpha} \bar{f}_H(B)]$ 
& $\bar{\alpha} \bar{\phi} \bar{p} [1 + \alpha \bar{f}_L(B)]$ 
    \\ \hline
 $\bm{B}$ 
& $\bar{\alpha} \phi p$ 
& $\alpha \phi \bar{p}$ 
& $\bar{\alpha}\bar{\phi}p [1 -\alpha \bar{f}_H(B)]$ &$\alpha \bar{\phi} \bar{p}[1 - \bar{\alpha} \bar{f}_L(B)]$ 
 \\ \hline
\textit{Total} & $\phi p$ & $\phi \bar{p}$ & $\bar{\phi}p$ & $\bar{\phi}\bar{p}$  \\ \hline
\end{tabular}
\caption{\label{tab:prob_maxtest}Probability Distribution with Best Test Reported [General Case]}
\end{table}

Suppose the College rejects all students regardless of reported scores. 
Since score doesn't matter for the admission decision, any $f_L(B) \in [0,1], f_H(B) \in [0,1]$ are best responses for Category 2 students in this case.  As the College rejects students with a score of $A$, the following must hold: 
    \[
\frac{\alpha \phi p + \alpha \bar{\phi}p [1 + \bar{\alpha}\bar{f}_H(B)] }{\alpha \phi p + \bar{\alpha} \phi \bar{p} +  \alpha \bar{\phi} p [1 + \bar{ \alpha}\bar{f}_H(B)] + \bar{\alpha} \bar{\phi} \bar{p} [1 + \alpha \bar{f}_L(B)]} \leq \frac{1}{2} 
\] 

\begin{equation}\label{xL1}
\Rightarrow \quad \bar{f}_L(B) \geq \frac{p - \bar{\alpha} + \alpha \bar{\alpha} \bar{\phi} p \bar{f}_H(B)}{\alpha \bar{\alpha} \bar{\phi} \bar{p} }.
\end{equation}
 As the College rejects students with signal $B$, the following must be true: 

\[
\frac{\bar{\alpha} \phi p + \bar{\alpha}\bar{\phi}p [1 -\alpha\bar{f}_H(B))]}{\bar{\alpha}\phi p + \alpha \phi \bar{p} + \bar{\alpha}\bar{\phi}p [1 -\alpha\bar{f}_H(B)] + \alpha\bar{\phi}\bar{p}[1 - \bar{\alpha} \bar{f}_L(B)]}
\leq \frac{1}{2} 
\] 

\begin{equation}\label{xL2}
\Rightarrow \quad \bar{f}_L(B) \leq
 \frac{-p + \alpha + \alpha \bar{\alpha} \bar{\phi} p \bar{f}_H(B)}{\alpha \bar{\alpha} \bar{\phi} \bar{p} }.
\end{equation}
Combining (\ref{xL1}) and (\ref{xL2}) we deduce that

\begin{align*}
\frac{-p + \alpha + \alpha \bar{\alpha} \bar{\phi} p \bar{f}_H(B)}{\alpha \bar{\alpha} \bar{\phi} \bar{p} } \geq \frac{p - \bar{\alpha} + \alpha \bar{\alpha} \bar{\phi} p \bar{f}_H(B)}{\alpha \bar{\alpha} \bar{\phi} \bar{p} } 
&\quad \Rightarrow \quad  
p \leq \frac{1}{2},\\    
\frac{-p + \alpha + \alpha \bar{\alpha} \bar{\phi} p \bar{f}_H(B)}{\alpha \bar{\alpha} \bar{\phi} \bar{p} } \geq \; 0 
&\quad \Leftarrow \quad  
p \leq \frac{1}{2},\\
\frac{p - \bar{\alpha} + \alpha \bar{\alpha} \bar{\phi} p \bar{f}_H(B)}{\alpha \bar{\alpha} \bar{\phi} \bar{p} } \leq \; 1 
&\quad \Rightarrow \quad  
p \leq \frac{\alpha \bar{\alpha} \bar{\phi} + \bar{\alpha}}{\alpha \bar{\alpha} \bar{\phi} + 1}, \bar{f}_H(B) \leq  \frac{\alpha \bar{\alpha} \bar{\phi} \bar{p}+ \bar{\alpha} -p}{\alpha \bar{\alpha} \bar{\phi} p}.
\end{align*}
 Therefore, for $p \leq \hat{\hat{p}}$ where $\hat{\hat{p}} = \min\{\frac{1}{2}, \frac{\alpha \bar{\alpha} \bar{\phi} + \bar{\alpha}}{\alpha \bar{\alpha} \bar{\phi} + 1}\} \in (\hat{p}, \frac{1}{2}] $ , there exists an equilibrium in which the College rejects all students. In such an equilibrium, $(2,L)$ take a second test with probability $\bar{f}_L(B)$ when the first score is $B$, $(2,H)$ students take a second test with probability $\bar{f}_H(B)$ when the first score is $B$ and we have $\bar{f}_H(B) \in [0, \frac{\alpha \bar{\alpha} \bar{\phi} \bar{p}+ \bar{\alpha} -p}{\alpha \bar{\alpha} \bar{\phi} p}], \bar{f}_L(B) \in [\frac{p - \bar{\alpha} + \alpha \bar{\alpha} \bar{\phi} p \bar{f}_H(B)}{\alpha \bar{\alpha} \bar{\phi} \bar{p} },\frac{-p + \alpha + \alpha \bar{\alpha} \bar{\phi} p \bar{f}_H(B)}{\alpha \bar{\alpha} \bar{\phi} \bar{p} }]$.

\section{Reporting All Scores}\label{sec: report all}
Here we characterize the equilibrium outcome under ``Report All''. 

Let $u_B$ be 1 if the College accepts a student reporting a single $B$ score and zero otherwise. Similarly define
$u_A, u_{BA}, u_{AB}, u_{BB}, u_{AA}$ to be the indicators of whether the College accepts a student reporting the corresponding sequence of exam scores. We proceed by examining whether various combinations of values for these $u$ variables can be supported in equilibrium. It might seem that some combinations could be eliminated immediately---but things are not so simple.

Sometimes our intuition will be confirmed. For example, it may be obvious that we should have $u_B=u_{BB} =0$. Why would the College accept a student who only reports $B$s? If $p$ were large, say close to 1, then it would. But in our case, $p < 0.5$ and our analysis will confirm that for such values of $p$, we have $u_B=u_{BB} =0$ in equilibrium.

Now a more counter-intuitive case: Should $u_{AB} = u_{BA}$? After all, why should the timing of a $B$ score matter? If these test scores were exogenously given to us, then by symmetry, they would induce the same posterior belief on a student's type. But our analysis  shows that in equilibrium, there is an important distinction  because of the incentives induced in equilibrium for  students to take a second test --- and this in turn affects the inferences the College makes. 

\Xomit{We now introduce notation to encode the strategies of the students and their expected payoffs. 

In Sections \ref{bestH} and \ref{bestL} we determine the best responses of the students to the possible choices of $u_B, u_A, u_{BA}, u_{AB}, u_{BB}, u_{AA}$. Section \ref{equilibrium} characterizes the unique equilibrium.
\ar{I thought there were 2 equilibria. Here we mean the unique equilibrium strategy for the College?}
\mn{There are a series of equilibria (depending on College's beliefs off the equilibrium path) and thus multiple equilibrium strategies for the College accordingly. Only the equilibrium outcome is unique in the range of interest.}
Specifically, $u_A= u_{AB}=u_{AA}=1$ and $u_B = u_{BA}=u_{BB} = 0$\mn{We can only say in equilibrium $u_A=1, u_B =u_{BA}=u_{BB}= 0$, while $u_{AB}, u_{AA} \in \{0,1\}$}.}

We introduce the following variables to track the actions of the category 2 students as a function of their first test score: 
\begin{enumerate}
    \item $f_L(A)$: the probability that a $(2,L)$ students stops after one test with a score of $A$.
    \item $f_L(B)$: the probability that a $(2,L)$ student stops after one test with a score of $B$.
    \item $f_H(A)$: the probability that a $(2,H)$ students stops after one test with a score of $A$.
    \item $f_H(B)$: the probability that a $(2,H)$ students stops after one test with a score of $B$.
\end{enumerate}


We remark that we can interpret these probabilities as the \emph{fraction} of a student population that takes the corresponding action, so we do not have to imagine that individual students randomize. 

Fixing student strategies, the probabilities of each  testing outcome are displayed in \textit{Table \ref{tab:prob_alltests}}. 

\begin{table}[H]
\begin{tabular} {|c|c|c|c|c|} \hline
Score / Type & $\bm{(1,H)}$ & $\bm{(1,L)}$ & $\bm{(2,H)}$ & $\bm{(2,L)}$ \\ \hline
 $\bm{A}$ 
 & $\alpha \phi p$  &$\bar{\alpha} \phi \bar{p}$ & $\alpha\bar{\phi}pf_H(A)$ & $\bar{\alpha}\bar{\phi}\bar{p}f_L(A)$     \\ \hline
 $\bm{B}$ 
 & $\bar{\alpha} \phi p$ & $\alpha \phi \bar{p}$ & $\bar{\alpha}\bar{\phi}pf_H(B)$ &$\alpha\bar{\phi}\bar{p}f_L(B)$  \\ \hline
 $\bm{AA}$ 
 &0 &0 &$\alpha^2\bar{\phi}p\bar{f}_H(A)$ &$\bar{\alpha}^2\bar{\phi}\bar{p}\bar{f}_L(A)$   \\ \hline
$\bm{AB}$ 
 & 0&0 &$\alpha\bar{\alpha}\bar{\phi}p\bar{f}_H(A)$ & $\alpha\bar{\alpha}\bar{\phi}\bar{p}\bar{f}_L(A)$  \\ \hline
 $\bm{BA}$ 
 & 0&0 &$\alpha\bar{\alpha}\bar{\phi}p\bar{f}_H(B)$ & $\alpha\bar{\alpha}\bar{\phi}\bar{p}\bar{f}_L(B)$  \\ \hline
 $\bm{BB}$ 
 &0 &0 & $\bar{\alpha}^2\bar{\phi}p\bar{f}_H(B)$& $\alpha^2\bar{\phi}\bar{p}\bar{f}_L(B)$  \\ \hline
 \textit{Total }
 & $\phi p$ & $\phi \bar{p}$ & $\bar{\phi}p$ & $\bar{\phi}\bar{p}$  \\ \hline
\end{tabular}
\caption{\label{tab:prob_alltests}Distribution of Testing Outcomes}
\end{table}

The following theorem characterizes equilibrium outcomes under ``Report All" within the parameter range of interest.

\begin{theorem}[First-Score Equilibrium]
\label{th:all_2_eqm}
For $p \in (1-\alpha, 0.5)$, the unique equilibrium outcome for ``Report All" is that a student is admitted if their first (or only) score is $A$ and rejected otherwise.
\end{theorem}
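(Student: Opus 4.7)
The plan is to prove the theorem in two parts: existence of an equilibrium with the stated outcome, and uniqueness of the outcome across all equilibria. For existence I would exhibit the profile in which every Category 2 student stops after a single test, so only single scores arise on path, and the College admits on $A$ and rejects on $B$. In the range $p\in(1-\alpha,1/2)$, the on-path posteriors coincide with the single-test Bayesian posteriors $P(H\mid A)=\alpha p/(\alpha p+\bar\alpha\bar p)>1/2$ and $P(H\mid B)<1/2$, so the College's on-path actions are best responses. The multi-score reports are off-path, so off-path beliefs can be specified to make rejection of every $B$-prefixed multi-score report a best response; no Category 2 student can profitably deviate, because those with first $A$ already obtain the maximum payoff of $1$ by stopping, and those with first $B$ cannot improve on $0$.

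For uniqueness of the outcome I would perform a case analysis on the six College indicators $u_A,u_B,u_{AA},u_{AB},u_{BA},u_{BB}$, in three steps. First, I show $u_{BA}=u_{BB}=0$. Assume for contradiction $u_{BA}=1$ (the $u_{BB}=1$ case is analogous). If $u_B=0$, any Category 2 student with first $B$ strictly prefers to retest, so in the induced distribution $P(H\mid BA)$ simplifies via Table~\ref{tab:prob_alltests} to $p/(p+\bar p)=p<1/2$, contradicting $u_{BA}=1$. If instead $u_B=1$, then $P(H\mid B)\geq 1/2$ forces $f_H(B)>f_L(B)$ (since $\bar\alpha p<\alpha\bar p$ when $p<\alpha$), whereas $P(H\mid BA)\geq 1/2$ forces the reverse inequality $\bar f_H(B)>\bar f_L(B)$ (since $p<\bar p$); these are incompatible unless $f_H(B)=f_L(B)=1$, which collapses the $B$-pool to the single-test posterior $p_B<1/2$ and again contradicts $u_B=1$.

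Next, given $u_{BA}=u_{BB}=0$, I show $u_B=0$: if $u_B=1$, both types of Category 2 with first $B$ strictly stop, so the single-$B$ pool equals the entire first-score-$B$ pool and its posterior is $p_B=p\bar\alpha/(p\bar\alpha+\bar p\alpha)<1/2$ since $p<\alpha$, a contradiction. Then I show $u_A=1$: if $u_A=0$ and at least one of $u_{AA},u_{AB}$ equals $1$, all Category 2 students with first $A$ strictly retest, so the single-$A$ pool consists of Category 1 only and its posterior exceeds $1/2$ in the range $p>1-\alpha$, a contradiction. Otherwise $u_A=u_{AA}=u_{AB}=0$, and I would derive a contradiction from the fact that $P(H\mid A)\leq 1/2$ forces $f_H(A)<f_L(A)$, while $P(H\mid AA)\leq 1/2$ forces $\bar f_H(A)<\bar f_L(A)$ because $(\bar\alpha/\alpha)^2(\bar p/p)<1$ in this range.

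Finally, with $u_A=1$, $u_B=0$, and $u_{BA}=u_{BB}=0$ fixed in any equilibrium, a Category 2 student with first $A$ can guarantee admission by stopping, so the on-path admission of every student depends only on the first score, regardless of the values of $u_{AA}$ and $u_{AB}$ in any particular equilibrium (they may vary, but do not alter the outcome). The main obstacle is the $u_B=1$ subcase in the first step, since it is the only point where one must rule out an equilibrium sustained by genuine mixing rather than by a strict preference, and the argument must exploit the tight range $p\in(1-\alpha,1/2)$ to force the two relevant posterior constraints into opposite directions.
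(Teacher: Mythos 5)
Your proposal is correct and follows essentially the same route as the paper: a case analysis on the College's admission indicators, ruling out each non-first-score rule by combining the students' incentive constraints with Bayesian consistency of the College's posteriors (e.g., a rejected single score forces all Category 2 students to retest, collapsing the single-score pool to Category 1 alone and contradicting the required posterior bound). The only cosmetic difference is that where the paper dispatches the ``reject all of $A,AA,AB$'' and ``admit all of $B,BA,BB$'' cases in one line via the law of total probability, you derive incompatible orderings of $f_H$ versus $f_L$ from pairs of posterior conditions; both arguments are valid, and your version additionally makes the existence half of the claim explicit.
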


\begin{remark}
For clarity, we prove \textit{Theorem \ref{th:all_2_eqm}} here, but we remark that it can also be derived as a  corollary of the more general \textit{Theorem \ref{th: all_eqm}} for the case in which Category 2 students may take $k \geq 2$ tests. We prove  this more general theorem in \textit{Section \ref{sec: report all_multi}}.
\end{remark}

\begin{proof}
Let  the College's posterior belief that a student is of type $H$ after observing some (sequence of) reported scores $s\in \{B,A,BB, BA, AB,AA\}$ be $p_s \in [0,1]$. The College's expected payoff from admitting a student with score $s$ is $1 \cdot p_s +  (-1) \cdot (1-p_s) = 2p_s - 1$. Hence, the optimal admission policy for college is:
$$u_s = \begin{cases}
    1  &\quad \text{ if } p_s > \frac{1}{2},\\
    0 \text { or } 1 &\quad \text{ if } p_s= \frac{1}{2},\\
    0 &\quad \text{ if } p_s < \frac{1}{2}.
    \end{cases}$$ 
    From \textit{Table \ref{tab:prob_alltests}}, we can compute:
\begin{align}
    p_A &= \frac{\phi p \alpha + \bar{\phi} p \alpha f_H(A)}{\phi(p \alpha + \bar{p}\bar{\alpha}) + \bar{\phi}[ p \alpha f_H(A) + \bar{p}\bar{\alpha} f_L(A)]} \in (0, 1), \label{pA}\\
     p_B &= \frac{\phi p \bar{\alpha} + \bar{\phi} p \bar{\alpha} f_H(B)}{\phi(p \bar{\alpha} + \bar{p}\alpha) + \bar{\phi}[ p \bar{\alpha} f_H(B) + \bar{p}\alpha f_L(B)]} \in (0, 1), \label{pB}\\
     p_{AA} &= \frac{p \alpha^2 \bar{f}_H(A)}{p \alpha^2 \bar{f}_H(A) + \bar{p}\bar{\alpha}^2\bar{f}_L(A)} \quad \text{if $\, p \alpha^2 \bar{f}_H(A) + \bar{p}\bar{\alpha}^2\bar{f}_L(A)>0$}, \label{pAA}\\
    p_{AB} &= \frac{p \bar{f}_H(A)}{p \bar{f}_H(A) + \bar{p}\bar{f}_L(A)} \quad \quad \quad \, \text{if $\, p \bar{f}_H(A) + \bar{p}\bar{f}_L(A)>0$},\label{pAB}\\
    p_{BA} &= \frac{p \bar{f}_H(B)}{p \bar{f}_H(B) + \bar{p}\bar{f}_L(B)} \quad \quad \quad \, \text{if $\, p \bar{f}_H(B) + \bar{p}\bar{f}_L(B)>0$},\label{pBA}\\
    p_{BB} &= \frac{p \bar{\alpha}^2 \bar{f}_H(B)}{p \bar{\alpha}^2 \bar{f}_H(B) + \bar{p}\alpha^2\bar{f}_L(B)} \quad \text{if $\, p \bar{\alpha}^2 \bar{f}_H(B) + \bar{p}\alpha^2\bar{f}_L(B)>0$}. \label{pBB}
\end{align}

Note that $p_{sB}<p_{sA}$ if $f_L(s) <1$ and $f_H(s) <1$ for any $s\in \{A,B\}$. Also, $p_{AB} \neq p_{BA}$ if any of the following scenarios appear in equilibrium : (i) all students test once when their first score is $A$; or (ii) all students test once when their first score is $B$; or (iii) $\frac{\bar{f}_H(A)}{\bar{f}_L(A)} \neq \frac{\bar{f}_H(B)}{\bar{f}_L(B)}$. 

Next we will prove by contradiction that for all $p \in (\bar{\alpha}, 0.5)$, the College admits students submitting a single score $A$ and rejects students submitting $B, BA$, or  $BB$ in equilibrium. Thus, in any equilibrium, the admission outcome depends solely on the first score. Here we split the discussion according to whether the student's first score is $A$ or $B$.\\
\\
\textbf{Case 1: The First Score is $\mathbf{A}$}\par
Suppose, for contradiction, the College rejects students who submit a single score of $A$. Then we have the following two cases: 
\begin{enumerate}
\item If the College rejects all scores starting in $A$, namely $A, AA, AB$, then we need $\max\{p_A, p_{AA}, p_{AB}\} \leq \frac{1}{2}$ to rationalize this admission rule. Hence by \textit{the law of total probability}:
\small
\begin{align*}
p\alpha  &= 
p_A \{\phi(p \alpha + \bar{p}\bar{\alpha}) + \bar{\phi}[ p \alpha f_H(A) + \bar{p}\bar{\alpha} f_L(A)]\} 
+ p_{AA} \bar{\phi}[p \alpha^2 \bar{f}_H(A) + \bar{p}\bar{\alpha}^2\bar{f}_L(A)] + p_{AB} \bar{\phi}\alpha \bar{\alpha}[p \bar{f}_H(A) + \bar{p}\bar{f}_L(A)] \\
&\leq \frac{1}{2} \{\phi(p \alpha + \bar{p}\bar{\alpha}) + \bar{\phi}[ p \alpha f_H(A) + \bar{p}\bar{\alpha} f_L(A)] + \bar{\phi}[p \alpha^2 \bar{f}_H(A) + \bar{p}\bar{\alpha}^2\bar{f}_L(A)] + \bar{\phi}\alpha \bar{\alpha}[p \bar{f}_H(A) + \bar{p}\bar{f}_L(A)]\}\\
& = \frac{1}{2} (p\alpha + \bar{p} \bar{\alpha}). 
\end{align*}
\normalsize
This implies $\frac{p\alpha}{p\alpha + \bar{p} \bar{\alpha}} \leq \frac{1}{2}$, i.e., $p \leq \bar{\alpha}$, a contradiction.

\item If the College admits students reporting either $AA$ or $AB$, then all Category 2 students take the test a second time after obtaining a first score of $A$ (i.e., $f_H(A) = f_L(A) = 0$). In this case, the only students reporting a single score of $A$ are from Category 1, and so by \textit{Equation \ref{pA}} we have $p_{A} = \frac{p\alpha}{p\alpha + \bar{p} \bar{\alpha}} \leq \frac{1}{2}$.  This implies $p \leq \bar{\alpha}$, a contradiction again.
\end{enumerate}
Therefore, as long as $p \in (\bar{\alpha}, 0.5)$, a single score $A$ yields admission. \\
\\
\textbf{Case 2: The First Score is $\mathbf{B}$}\par
Suppose, for a contradiction, the College admits students reporting any of $B, BA$ or $BB$. Then we have the following three cases: 
\begin{enumerate}
\item If the College admits all scores starting in $B$, namely $B, BA, BB$, then we need $\min\{p_B, p_{BA}, p_{BB}\} \geq \frac{1}{2}$ to rationalize this admission rule. Hence by \textit{the law of total probability}:
\small
\begin{align*}
p\bar{\alpha}  &= 
p_B \{\phi(p \bar{\alpha} + \bar{p}\alpha) + \bar{\phi}[ p \bar{\alpha} f_H(B) + \bar{p}\alpha f_L(B)]\} 
+ p_{BA} \alpha \bar{\alpha} \bar{\phi}[p \bar{f}_H(B) + \bar{p}\bar{f}_L(B)] 
+ p_{BB} \bar{\phi}[p \bar{\alpha}^2 \bar{f}_H(B) + \bar{p}\alpha^2\bar{f}_L(B)] \\
&\geq \frac{1}{2} \{\phi(p \bar{\alpha} + \bar{p}\alpha) + \bar{\phi}[ p \bar{\alpha} f_H(B) + \bar{p}\alpha f_L(B)] + \alpha \bar{\alpha} \bar{\phi}[p \bar{f}_H(B) + \bar{p}\bar{f}_L(B)]  + \bar{\phi}[p \bar{\alpha}^2 \bar{f}_H(B) + \bar{p}\alpha^2\bar{f}_L(B)] \}\\
& = \frac{1}{2} (p\bar{\alpha} + \bar{p} \alpha). 
\end{align*}
\normalsize
This implies $\frac{p\bar{\alpha}}{p\bar{\alpha} + \bar{p} \alpha} \geq \frac{1}{2}$, i.e., $p \geq \alpha > 0.5$, a contradiction.

\item If the College admits students reporting $B$ but rejects students reporting either $BA$ or $BB$, then all Category 2 students take the test once if their first score is $B$---since they won't run the risk of being rejected due to the second test result. Thus by \textit{Equation \ref{pB}}, $p_{B} = \frac{p\bar{\alpha}}{p\bar{\alpha} + \bar{p} \alpha} \geq \frac{1}{2}$. This implies $p \geq \alpha > 0.5$, a contradiction again.

\item If the College rejects students reporting a score of $B$ but admits students reporting either a score of $BA$ or $BB$, then all Category 2 students take the test twice if their first score is $B$. In this case, by \textit{Equation \ref{pBA}} and \textit{Equation \ref{pBB}}, $p_{BB} < p_{BA} = p <\frac{1}{2}$, a contradiction to the assumption that the College admits either $BA$ or $BB$.
\end{enumerate}
Therefore, as long as $p \in (\bar{\alpha}, 0.5)$, a score beginning in $B$ yields rejection. This completes the proof. 
\end{proof}

In what follows, we refer to the set of equilibria in which the admissions outcome is made solely as a function of the first score as the \emph{first-score equilibrium}.

\Xomit{Fixing an admissions policy for the college $(u_A, u_B, u_{AA}, u_{AB},u_{BA},u_{BB})$, we can compute the expected payoffs for each possible strategy for the students, to see if they are supported as a best response to the College's policy: 
\begin{itemize}
\item the expected payoff of a $(2,L)$ student taking the second test after one test with a score of $A$ is:
$$\pi_A^L  = \alpha u_{AB} + \bar{\alpha} u_{AA};$$
\item the expected payoff of a $(2,H)$ student taking the second test after one test with a score of $A$ is:
$$\pi_A^H = \bar{\alpha} u_{AB} + \alpha u_{AA};$$
\item the expected payoff of a $(2,L)$ student taking the second test after one test with a score of $B$ is:
$$\pi_B^L = \alpha u_{BB} + \bar{\alpha} u_{BA};$$
\item the expected payoff of a $(2,H)$ student taking the second test after one test with a score of $B$ is:
$$\pi_B^H  = \bar{\alpha} u_{BB} + \alpha u_{BA}.$$
\end{itemize} 
Next we must determine the best responses of category 2 students after receiving scores of $A$ and $B$ on their first test --- this is what we do next.
\subsection{Best Response After a First Score of $A$}\label{bestH}
We have three possible cases to consider, as a function of the College's admissions policy:
\begin{enumerate}
    \item $u_{AB} = u_{AA}.$ \\
    Let $u \equiv u_{AB} = u_{AA}$. Then we have $\pi_A^L = \pi_A^H = u$ and a student's optimal choice depends on how $u_A$ compares to $u$ (see Figure \ref{fig:uh_equal} for an illustration):
    \begin{itemize}
        \item if $u_A > u$,  both $(2,L)$ and $(2,H)$ students  choose not to take the second test, i.e. $f_L(A) = f_H(A) = 1$;
        \item if $u_A < u$,  both $(2,L)$ and $(2,H)$ students  choose to take the second test, i.e. $f_L(A) = f_H(A) = 0$;
        \item if $u_A = u$,  both $(2,L)$ and $(2,H)$ students are indifferent between whether or not to take the second test, i.e. $f_L(A), f_H(A) \in [0, 1]$.
    \end{itemize}
    
    \begin{figure}[H]
	    \centering
	    \includegraphics[trim=0 175 0 130,clip,width=0.8\textwidth]{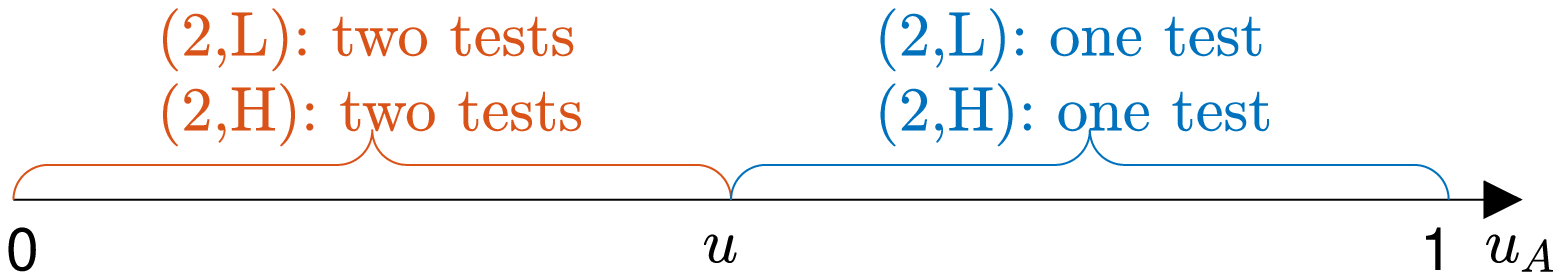}
	    \caption{Optimal Test Choices when $u_{AB} = u_{AA} = u$}
	    \label{fig:uh_equal}
    \end{figure}
    
    \item $u_{AB} < u_{AA}.$ \\
    \sk{Can I think of the $u$'s as always being 0 or 1 since we are considering only deterministic strategies for the college? It seems then, that in this case with the payoffs $\pi_A^L$ and $\pi_A^H$ being strictly between 0 and 1, we could replace a statement like $u_A > \pi_A^H$ by $u_A = 1$? This would make the whole argument a little more transparent to me... if we make this kind of change everywhere in this argument. It also would eliminate a case like $u_A = \pi_A^H$ or a few other cases in $(2)$ and $(3)$ in this proof.}

   Then, $\pi_A^L < \pi_A^H$ and a student's optimal choice depends on how $u_A$ compares to $\pi_A^L$ and $\pi_A^H$ (see Figure \ref{fig:uhh_larger} for an illustration):
    \begin{itemize}
        \item if $u_A > \pi_A^H$, then both $(2,L)$ and $(2,H)$ students  choose not to take the second test, i.e. $f_L(A) = f_H(A) = 1$;
         \item if $u_A = \pi_A^H$, then $(2,L)$ students  choose not to take the second test while $(2,H)$ is indifferent, i.e. $f_L(A) = 1, f_H(A) \in [0,1]$;
         \item if $u_A \in (\pi_A^L, \pi_A^H)$, then $(2,L)$ students  choose not to take the second test while $(2,H)$ chooses to, i.e. $f_L(A) = 1, f_H(A) = 0$;
         \item if $u_A = \pi_A^L$, then $(2,L)$ students are indifferent while $(2,H)$ students  choose to take the second test, i.e. $f_L(A) \in [0, 1], f_H(A) = 0$;
        \item if $u_A < \pi_A^L$, then both $(2,L)$ and $(2,H)$ students  choose to take the second test, i.e. $f_L(A) = f_H(A) = 0$.
    \end{itemize}
    
     \begin{figure}[H]
	    \centering
	    \includegraphics[trim=0 175 0 130,clip,width=0.8\textwidth]{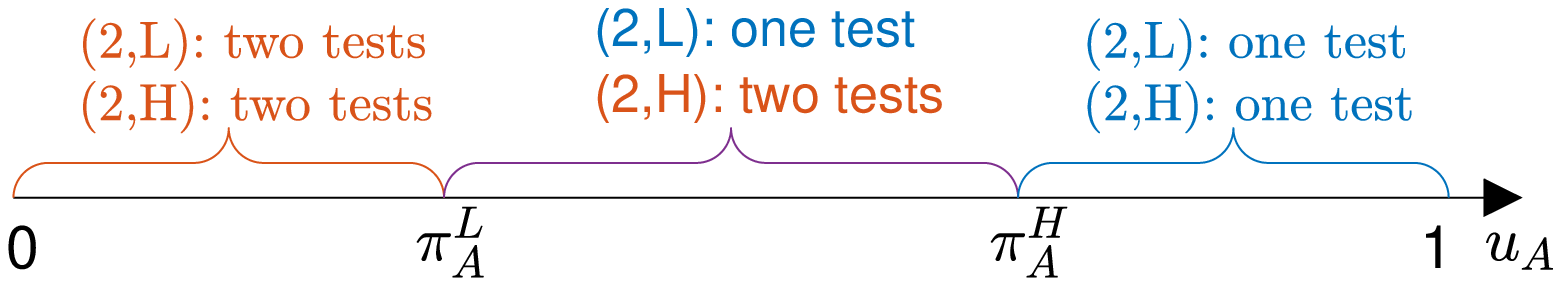}
	    \caption{Optimal Test Choices when $u_{AB} < u_{AA}$}
	    \label{fig:uhh_larger}
    \end{figure}
    
    \item $u_{AB} > u_{AA}.$ \\
    Then, $\pi_A^L > \pi_A^H$ and again a student's optimal choice depends on how $u_A$ compares to $\pi_A^L$ and $\pi_A^H$ (see Figure \ref{fig:uhl_larger} for an illustration):
    \begin{itemize}
        \item if $u_A > \pi_A^L$, then both $(2,L)$ and $(2,H)$ students optimally choose not to take the second test, i.e. $f_L(A) = f_H(A) = 1$;
         \item if $u_A = \pi_A^L$, then $(2,L)$ students are indifferent while $(2,H)$ students  choose to take the second test, i.e. $f_L(A) \in [0, 1], f_H(A) = 0$;
         \item if $u_A \in (\pi_A^H, \pi_A^L)$, then $(2,L)$ students  choose to take the second test while $(2,H)$ students chooses not to, i.e. $f_L(A) = 1, f_H(A) = 0$;
         \item if $u_A = \pi_A^H$, then $(2,L)$ students  choose not to take the second test while $(2,H)$ students are  indifferent, i.e. $f_L(A) = 1, f_H(A) \in [0,1]$;
        \item if $u_A < \pi_A^H$, then both $(2,L)$ and $(2,H)$ students  choose to take the second test, i.e. $f_L(A) = f_H(A) = 0$.
    \end{itemize}
     \begin{figure}[H]
	    \centering
	    \includegraphics[trim=0 175 0 130,clip,width=0.8\textwidth]{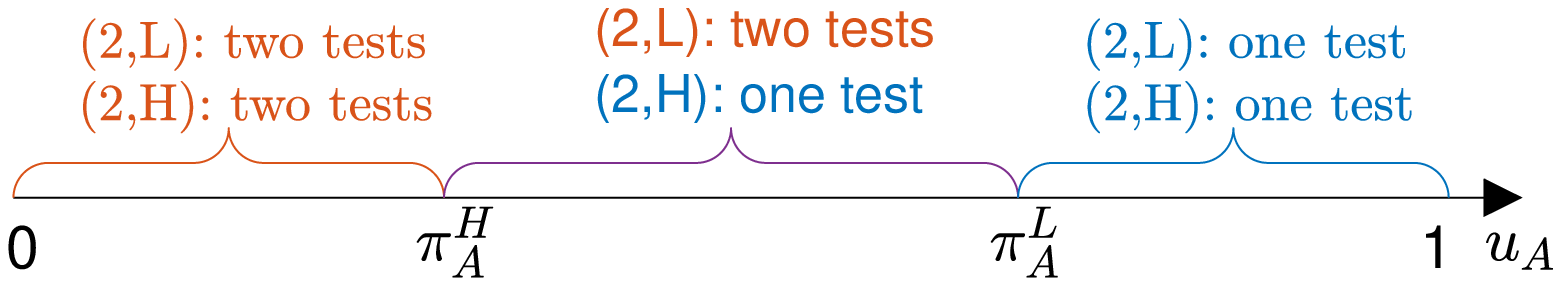}
	    \caption{Optimal Test Choices when $u_{AB} > u_{AA}$}
	    \label{fig:uhl_larger}
    \end{figure}
\end{enumerate}
\subsection{Best Response After a First Score of $B$}\label{bestL}
Similarly, based on the relationship between $u_{BB}$ and $u_{BA}$, we have the following three cases. The analysis is identical to the previous subsection. 
    \begin{figure}[H]
	    \centering
	    \includegraphics[trim=0 175 0 130,clip,width=0.8\textwidth]{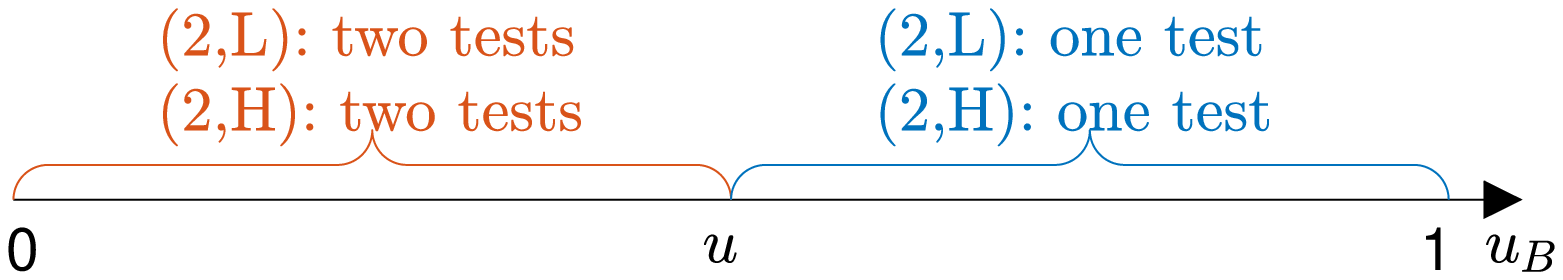}
	    \caption{Optimal Test Choices when $u_{BB} = u_{BA} = u$}
	    \label{fig:ul_equal}
    \end{figure}
     \begin{figure}[H]
	    \centering
	    \includegraphics[trim=0 175 0 130,clip,width=0.8\textwidth]{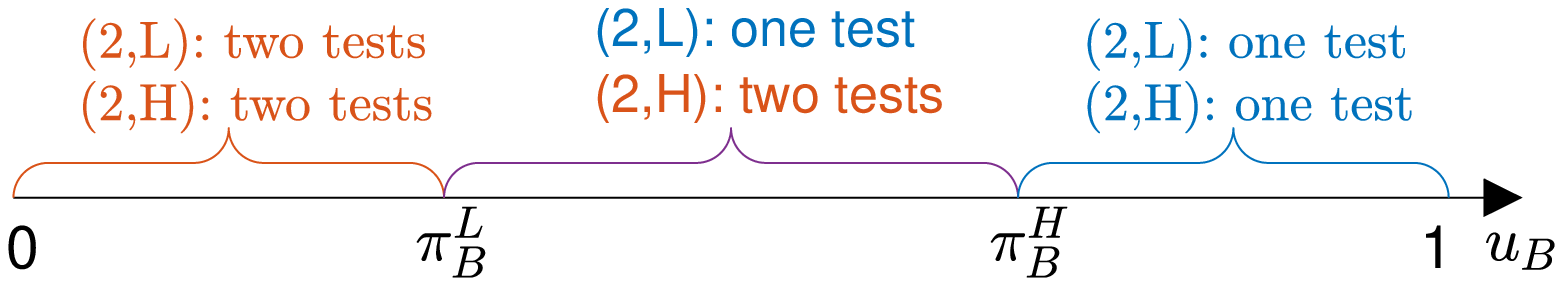}
	    \caption{Optimal Test Choices when $u_{BB} < u_{BA}$}
	    \label{fig:ulh_larger}
    \end{figure}
    \begin{figure}[H]
	    \centering
	    \includegraphics[trim=0 175 0 130,clip,width=0.8\textwidth]{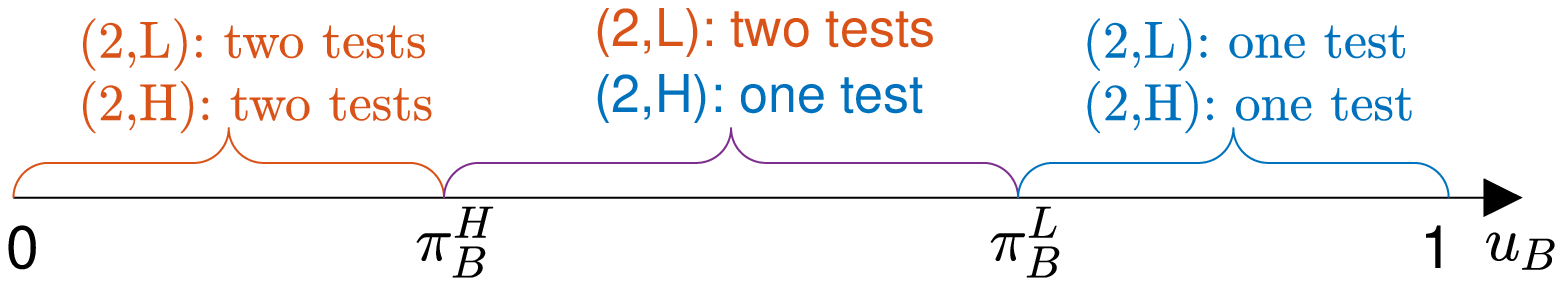}
	    \caption{Optimal Test Choices when $u_{BB} > u_{BA}$}
	    \label{fig:ull_larger}
    \end{figure}

\Xomit{
As we focus on deterministic equilibrium a students is  accepted if the probability that the student is of high type given signal $s$ exceeds 0.5 and rejected otherwise. Hence:
\\
 $u_s = \begin{cases}
    1  &\quad \text{ if } Pr(h|s) > \frac{1}{2},\\
    0 \text { or } 1 &\quad \text{ if } Pr(h|s) = \frac{1}{2},\\
    0 &\quad \text{ if } Pr(h|s) < \frac{1}{2}.
    \end{cases}$ for any $s \in \{L,H,LL, LH, HL,HH\}$ 
}

\subsection{Equilibrium}\label{equilibrium}
Now that we understand student best responses to College admissions policies, we are in a position to compute and characterize equilibrium outcomes. 
\begin{theorem}\label{th:all_2_eqm}
For $p \in [\hat{p}, 0.5)$, the equilibrium outcome for ``Report All" is that a student is admitted if the first score is $A$ and rejected otherwise. \ar{Unclear what we mean by ``the nontrivial equilibrium outcome'' --- is there also a trivial outcome in this parameter range?}\mn{Agree. "Accept all" appears when $p > \alpha > \frac{1}{2}$; "Reject all" appears when $p <1- \alpha < \hat{p}$}
\end{theorem}
\begin{proof}
Let  the College's posterior belief that a student is of type $H$ after observing a set of reported scores $s\in \{B,A,BB, BA, AB,AA\}$ be $p_s \in [0,1]$. The College's expected payoff from admitting a student with score $s$ is $1 \cdot p_s +  (-1) \cdot (1-p_s) = 2p_s - 1$. Hence, the optimal admission policy for college is:
$$u_s = \begin{cases}
    1  &\quad \text{ if } p_s > \frac{1}{2},\\
    0 \text { or } 1 &\quad \text{ if } p_s= \frac{1}{2},\\
    0 &\quad \text{ if } p_s < \frac{1}{2}.
    \end{cases}$$ 
    We first examine the case in which a student's first test score is $A$ and the beliefs corresponding to all possible continuations, namely, $(p_A, p_{AB}, p_{AA})\in [0,1]^3$. From Table \ref{tab:prob_alltests}, we can compute:
\begin{align}
    p_A &= \frac{\phi p \alpha + \bar{\phi} p \alpha f_H(A)}{\phi(p \alpha + \bar{p}\bar{\alpha}) + \bar{\phi}( p \alpha f_H(A) + \bar{p}\bar{\alpha} f_L(A))} \in (0, 1), \label{u_A}\\
     p_{AA} &= \frac{p \alpha^2 \bar{f}_H(A)}{p \alpha^2 \bar{f}_H(A) + \bar{p}\bar{\alpha}^2\bar{f}_L(A)} \quad \text{if $\, p \alpha^2 \bar{f}_H(A) + \bar{p}\bar{\alpha}^2\bar{f}_L(A)>0$}. \label{u_AH}\\
    p_{AB} &= \frac{p \bar{f}_H(A)}{p \bar{f}_H(A) + \bar{p}\bar{f}_L(A)} \quad \quad \quad \, \text{if $\, p \bar{f}_H(A) + \bar{p}\bar{f}_L(A)>0$}, \label{u_AL}
\end{align}
Note that $p_{AB}<p_{AA}$ if $f_L(A) <1$ and $f_H(A) <1$. There are three possibilities depending on the relationship between $u_{AA}$ and $u_{AB}$.
\begin{enumerate}
    \item $u_{AB} = u_{AA} = u \in \{0,1\}$.\\
    Then we have $\pi_A^L = \pi_A^H = u$. \ar{Is this uniquely determined? If $u_A = u_{AB} = u_{AA}$ then we could also have $\pi_A^L = \pi_A^H = 1-u$, could we not?}\mn{By definition of $\pi_A^L$, $\pi_A^L = \alpha u_{AB} + \bar{\alpha} u_{AA} = u$. Same with $\pi_A^H$} According to the relationship between $u_{A}$ and $u$, we need to discuss the following three cases:
    \begin{enumerate}
        \item If $u_{A} > u$, then, $u_{A} = 1, u = 0$. Thus, category 2 students stop if they get an $A$ on their first test, i.e., $f_L(A) = f_H(A) = 1$. By equation (\ref{u_A}), we have $p_A = \frac{p\alpha}{p\alpha + \bar{p}\bar{\alpha}} \geq \frac{1}{2}$, i.e., $p\geq \bar{\alpha}$. Since no one takes the test twice after receiving an $A$ on their first test, $p_{AA}$ and $p_{AB}$ are beliefs off the equilibrium path and we can set $p_{AA} < \frac{1}{2}, p_{AB} < \frac{1}{2}$. A further check can show that these Perfect Bayesian equilibria survive the Intuitive Criterion as well\footnote{As an equilibrium refinement, Intuitive Criterion requires belief off the equilibrium path to be rationalizable. To check this, we first  identify rationalizable action set off the equilibrium path (namely after $AA$ and $AB$) for students. We can see that $(2,H)$ and $(2,L)$ share the same rationalizable action set, that is, $f^r_L(A) \in [0,1], f^r_H(A) \in [0,1]$. Therefore, any belief of the College off the equilibrium path is rationalizable. In particular, a belief system in which $p_{AA} <  \frac{1}{2}, p_{AB} < \frac{1}{2}$ survives the Intuitive Criterion.}. \ar{What is the ``intuitive criterion'' and what is this further check?} \mn{This is saying that this equilibrium is not just Nash equilibrium; it is a Perfect Bayesian equilibrium that survives Intuition Criterion (a certain equilibrium refinement on off-equilibrium path belief -- basically indicating that off-equilibrium path belief can be rationalizable)}. \ar{Hm, this won't be standard for the typical EC reader --- if we're going to talk about it, will need more explanation and a citation.}
        \item If $u_{A} < u$, then, $u_{A} = 0, u = 1$. Thus, category 2 students take a second test even after they get $A$ on their first test, i.e., $f_L(A) = f_H(A) = 0$. By equations (\ref{u_AL}), we have $p_{AB} = p \geq \frac{1}{2}$ which cannot be. \ar{What is the contradiction? We assume $p < 1/2$ throughout, do we not?}
        \item If $u_{A} = u$, then 
        \begin{enumerate}
            \item if $u = 1$, there are two possible cases:
            \begin{enumerate}
                \item if $f_L(A) = f_H(A) = 1$, we have $p \geq \bar{\alpha}$ and we may set the beliefs off the equilibrium path to be $p_{HH} \geq \frac{1}{2}, p_{AB} \geq \frac{1}{2}$;
                \item otherwise, we have
                 \begin{align*}
              p_A = \frac{\phi p \alpha + \bar{\phi} p \alpha f_H(A)}{\phi(p \alpha + \bar{p}\bar{\alpha}) + \bar{\phi}( p \alpha f_H(A) + \bar{p}\bar{\alpha} f_L(A))}  \geq \frac{1}{2},\\
     p_{AA} \geq p_{AB} = \frac{p \bar{f}_H(A)}{p \bar{f}_H(A) + \bar{p}\bar{f}_L(A)} \geq \frac{1}{2}.
            \end{align*}
            Since $p_{A}$ strictly increases in $f_H(A)$ and decreases in $f_L(A)$, and $p_{AB}$ strictly decreases in $f_H(A)$ and increases in $f_L(A)$, we have $\min\{p_{A}, p_{AB}\} \geq \frac{1}{2}$ if and only if there exists $f_H(A) \in [0,1], f_L(A) \in [0,1]$ such that $p_{A} = p_{AB} \geq \frac{1}{2}$, which in turn requires $f_H(A) < f_L(A)$ and $\frac{p\alpha}{p\alpha + \bar{p}\bar{\alpha}} > \frac{1}{2}$, i.e. $p  > \bar{\alpha}$.
            \end{enumerate}
            \item if $u = 0$, there are two possible cases:
            \begin{enumerate}
                \item if $f_L(A) = f_H(A) = 1$, we have $p \leq \bar{\alpha}$ which cannot be,
                \item otherwise, we have
                 \begin{align*}
              p_{A} = \frac{\phi p \alpha + \bar{\phi} p \alpha f_H(A)}{\phi(p \alpha + \bar{p}\bar{\alpha}) + \bar{\phi}( p \alpha f_H(A) + \bar{p}\bar{\alpha} f_L(A))}  \leq \frac{1}{2},\\
    p_{AB} \leq p_{AA} = \frac{p \alpha^2 \bar{f}_H(A)}{p \alpha^2 \bar{f}_H(A) + \bar{p}\bar{\alpha}^2\bar{f}_L(A)} \leq \frac{1}{2}.
            \end{align*}
            Since $p_{A}$ strictly increases in $f_H(A)$ and decreases in $f_L(A)$, whereas $p_{AA}$ strictly decreases in $f_H(A)$ and increases in $f_L(A)$, we have $\max\{p_{A}, p_{AA}\} \leq \frac{1}{2}$ if and only if there exists $f_H(A) \in [0,1], f_L(A) \in [0,1]$ such that $p_{A} = p_{AA} \leq \frac{1}{2}$, which in turn requires $f_H(A) > f_L(A)$ and $\frac{p\alpha}{p\alpha + \bar{p}\bar{\alpha}} < \frac{1}{2}$, i.e. $p < \bar{\alpha}$, which cannot be.
            \end{enumerate}
        \end{enumerate}
    \end{enumerate}
    
    \item $0 = u_{AB} < u_{AA} = 1$.\\
    Then we have $\pi_A^L = \bar{\alpha} < \pi_A^H = \alpha$. According to the relationship between $u_{H}$, $u_{AB}$ and $u_{AA}$, we need to discuss the following two cases:
    \begin{enumerate}
        \item Case 1: $u_{A}=1 > \pi_A^H$, then $f_L(A) = f_H(A) = 1$. Again we are in the scenario where no one takes tests twice and thus $p_{A} = \frac{p\alpha}{p\alpha + \bar{p}\bar{\alpha}} \geq \frac{1}{2}$,  i.e., $p\geq \bar{\alpha}$. Since $p_{AA}$ and $p_{AB}$ are beliefs off the equilibrium path, we can set them such that $p_{AB} < \frac{1}{2}, p_{AA}>\frac{1}{2}$. 
        \item Case 2: $u_{A} = 0 < \pi_A^L$, then $f_L(A) = f_H(A) = 0$ and thus by equation (\ref{u_A}), (\ref{u_AL}) and (\ref{u_AH}), we have 
        \begin{align*}
            p_A = \frac{p\alpha}{p\alpha + \bar{p}\bar{\alpha}} \leq \frac{1}{2} \quad &\Rightarrow \quad p \leq \bar{\alpha},\\
            p_{AB} = p \leq \frac{1}{2}\quad &\Leftarrow \quad p \leq \bar{\alpha},\\
            p_{AA} = \frac{p \alpha^2}{p \alpha^2 + \bar{p} \bar{\alpha}^2} \geq \frac{1}{2} \quad&\Rightarrow \quad p \geq \frac{\bar{\alpha}^2}{\alpha^2 + \bar{\alpha}^2}.
        \end{align*}
   Therefore, we need $p \in [\frac{\bar{\alpha}^2}{\alpha^2 + \bar{\alpha}^2}, \bar{\alpha}]$.
    \end{enumerate}
    \item $1 = u_{AB} > u_{AA} = 0$. \\
    In this case, obtaining a \emph{worse} score on the 2nd test increases ones chance of admissions. One can discard this case on normative grounds,  by arguing that if this were true, a student would have an incentive to intentionally do badly on the second test, which is an undesirable property of an admissions rule. A reader content with this justification can skip this case. For completeness we show that this normative justification is unneccessary, and such an admissions rule cannot be sustained in equilibrium. 
    By Equation (\ref{u_AL}) and (\ref{u_AH}), this can only happen when $f_L(A) = f_H(A) = 1$. Thus, $u_A = 1$, and we have $p_A = \frac{p\alpha}{p\alpha + \bar{p}\bar{\alpha}} \geq \frac{1}{2}$. Since $p_{HH}$ and $p_{AB}$ are beliefs off the equilibrium path, we can make $p_{AB} > \frac{1}{2}, p_{AA}<\frac{1}{2}$. 
\end{enumerate}
We can use a similar method as above to analyze the case when the student's first test score is $B$\footnote{The detailed analysis is present in appendix B}. By such analysis, we have $u_B = u_{BA} = u_{BB} = 0$ in equilibrium for any $p \in [\hat{p}, \frac{1}{2})$. In other words, students whose first (or only) score is $B$ are rejected by the College.  

\Xomit{and the beliefs following that, namely, $(p_B, p_{BA}, p_{BB})\in [0,1]^3$. By Table \ref{tab:prob_alltests}, we have
\begin{align}
    p_B &= \frac{\phi p \bar{\alpha} + \bar{\phi} p \bar{\alpha} f_H(B)}{\phi(p \bar{\alpha} + \bar{p}\alpha) + \bar{\phi}( p \bar{\alpha} f_H(B) + \bar{p}\alpha f_L(B))} \in (0, 1), \label{p_B}\\
    p_{BA} &= \frac{p \bar{f}_H(B)}{p \bar{f}_H(B) + \bar{p}\bar{f}_L(B)} \quad \quad \quad \, \text{if $\, p \bar{f}_H(B) + \bar{p}\bar{f}_L(B)>0$}, \label{p_BH}\\
    p_{BB} &= \frac{p \bar{\alpha}^2 \bar{f}_H(B)}{p \bar{\alpha}^2 \bar{f}_H(B) + \bar{p}\alpha^2\bar{f}_L(B)} \quad \text{if $\, p \bar{\alpha}^2 \bar{f}_H(B) + \bar{p}\alpha^2\bar{f}_L(B)>0$}. \label{p_BL}
\end{align}
Note that $p_{BB}<p_{BA}$ if $f_L(B) <1$ and $f_H(B) <1$. There are three cases depending on the relationship between $u_{BA}$ and $u_{BB}$.
\begin{enumerate}
    \item $u_{BB} = u_{BA} = u \in \{0,1\}$.\\
    Then we have $\pi_B^L = \pi_B^H = u$. According to the relationship between $u_{B}$ and $u$, we need to discuss the following three cases:
    \begin{enumerate}
        \item If $u_{B} > u$, then we have $u_{B} = 1, u = 0$. Again, one can discard this case by arguing that if this were true, a student would have an incentive to intentionally do badly on the first test. A reader content with this justification can skip this case.
        \\
        Observe that a category 2 student stops once they get $B$ in their first test, i.e., $f_L(B) = f_H(B) = 1$. By Equation (\ref{p_B}), we have $p_B = \frac{p\bar{\alpha}}{p\bar{\alpha} + \bar{p}\alpha} \geq \frac{1}{2}$, i.e., $p\geq \alpha$, which cannot be.
        %
        \item If $u_{B} < u$, then we have $u_{B} = 0, u = 1$. Thus, category 2 students continue to take the second test after they get $B$ in their first test, i.e., $f_L(B) = f_H(B) = 0$. and thus by equation (\ref{p_B}), (\ref{p_BH}) and (\ref{p_BL}), we have $p_B = \frac{p\bar{\alpha}}{p\bar{\alpha} + \bar{p}\alpha} \leq \frac{1}{2}, p_{BA} = p \geq \frac{1}{2}, p_{BB} = \frac{p \bar{\alpha}^2}{p \bar{\alpha}^2 + \bar{p} \alpha^2} \geq \frac{1}{2}$. $p_{BB} \geq \frac{1}{2}$ implies $p \geq \frac{\alpha^2}{\alpha^2 + \bar{\alpha}^2} > \alpha$, a contradiction to $p_B \leq \frac{1}{2}$ (i.e., $p \leq \alpha$). So, this cannot be either.
        \item If $u_{B} = u$, 
        \begin{enumerate}
            \item if $u = 1$, we can discard this case by arguing that if this were true, a student would have an incentive to intentionally do badly on the first test. A reader content with this justification can skip this case. Otherwise,
            there are two possible cases:
            \begin{enumerate}
                \item if $f_L(B) = f_H(B) = 1$, we have $p \geq \alpha$ which cannot be.
                \item  otherwise, we have
                 \begin{align*}
               p_B = \frac{\phi p \bar{\alpha} + \bar{\phi} p \bar{\alpha} f_H(B)}{\phi(p \bar{\alpha} + \bar{p}\alpha) + \bar{\phi}( p \bar{\alpha} f_H(A) + \bar{p}\alpha f_L(B))}  \geq \frac{1}{2},\\
     p_{BA} \geq p_{BB} = \frac{p \bar{\alpha}^2 \bar{f}_H(B)}{p \bar{\alpha}^2 \bar{f}_H(B) + \bar{p}\alpha^2\bar{f}_L(B)} \geq \frac{1}{2}.
            \end{align*}
            Since $p_{B}$ strictly increases in $f_H(B)$ and decreases in $f_L(B)$, whereas $p_{BB}$ strictly decreases in $f_H(B)$ and increases in $f_L(B)$, we have $\min\{p_{B}, p_{BB}\} \geq \frac{1}{2}$ if and only if there exists $f_H(B) \in [0,1], f_L(B) \in [0,1]$ such that $p_{B} = p_{BB} \geq \frac{1}{2}$, which in turn requires $f_H(B) < f_L(B)$ and $\frac{p\bar{\alpha}}{p\bar{\alpha} + \bar{p}\alpha} > \frac{1}{2}$, i.e. $p  > \alpha$ which cannot be.
            \end{enumerate}
            \item if $u = 0$, there are two possible cases:
            \begin{enumerate}
                \item if $f_L(B) = f_H(B) = 1$, we have $p \leq \alpha$ and let the beliefs off the equilibrium path be $p_{BA} \leq \frac{1}{2}, p_{BB} \leq \frac{1}{2}$;
                \item otherwise, we have
                 \begin{align*}
              p_B = \frac{\phi p \bar{\alpha} + \bar{\phi} p \bar{\alpha} f_H(B)}{\phi(p \bar{\alpha} + \bar{p}\alpha) + \bar{\phi}( p \bar{\alpha} f_H(B) + \bar{p}\alpha f_L(B))}  \leq \frac{1}{2},\\
   p_{AA} \geq p_{BA} = \frac{p \bar{f}_H(B)}{p \bar{f}_H(B) + \bar{p}\bar{f}_L(B)} \leq \frac{1}{2}.
            \end{align*}
            Since $p_{B}$ strictly increases in $f_H(B)$ and decreases in $f_L(B)$, whereas $p_{BA}$ strictly decreases in $f_H(B)$ and increases in $f_L(B)$, we have $\max\{p_{B}, p_{BA}\} \leq \frac{1}{2}$ if and only if there exists $f_H(B) \in [0,1], f_L(B) \in [0,1]$ such that $p_{B} = p_{BA} \leq \frac{1}{2}$, which in turn requires $f_H(B) > f_L(B)$ and $\frac{p\bar{\alpha}}{p\bar{\alpha} + \bar{p}\alpha} < \frac{1}{2}$, i.e. $p < \alpha$.
            \end{enumerate}
        \end{enumerate}
    \end{enumerate}
    \item $0 = u_{BB} < u_{BA} = 1$.\\
    Then we have $\pi_L^l = \bar{\alpha} < \pi_L^h = \alpha$. According to the relationship between $u_{L}$, $u_{BB}$ and $u_{BA}$, we need to discuss the following two cases:
    \begin{enumerate}
        \item If $u_{B}=1 > \pi_L^h$, then $f_L(B) = f_H(B) = 1$. Again we are in the scenario where no one takes tests twice and thus $p_{L} = \frac{p\bar{\alpha}}{p\bar{\alpha} + \bar{p}\alpha} \geq \frac{1}{2}$,  i.e., $p\geq \alpha$, which cannot be.
        %
        \item If $u_{B} = 0 < \pi_L^l$, then $f_L(B) = f_H(B) = 0$ and thus by equation (\ref{p_B}), (\ref{p_BH}) and (\ref{p_BL}), we have $p_{BA} = p \geq \frac{1}{2}$ which cannot be.
    \end{enumerate}
    \item $1 = u_{BB} > u_{BA} = 0$. \\
     Again, one can discard this case by arguing that if this were true, a student would have an incentive to intentionally do badly on both  tests. A reader content with this justification can skip this case. Otherwise, by Equation (\ref{p_BH}) and (\ref{p_BL}), this can only happen when $f_L(B) = f_H(B) = 1$. Thus, $u_B = 1$, and we have $p_B = \frac{p\bar{\alpha}}{p\bar{\alpha} + \bar{p}\alpha} \geq \frac{1}{2}$, i.e, $p\geq \alpha$ which cannot be.
\end{enumerate}
}
In sum, all possible deterministic equilibria under ``Report All" can be summarized as follows. There are two possible cases. In the first case,
the College admits students as long as their first score is $A$, i.e. $u_A = u_{AB} = u_{AA} = 1$, and rejects otherwise.
Some category 2 students may take the test twice and we have $\bar{f}_H(A) \leq \min \{ \frac{\alpha-\bar{p}}{p\bar{\phi}(\alpha-\bar{\alpha})}, \frac{\bar{p}\bar{\alpha}\bar{\phi} + \alpha + p -1}{p\alpha\bar{\phi}}\}, \bar{f}_L(A) \in [\frac{\bar{p} - \alpha + p  \alpha  \bar{\phi} \bar{f}_H(A)}{ \bar{p}\bar{\alpha} \bar{\phi}},\frac{p \bar{f}_H(A)}{\bar{p}}]$. In the second case, all category 2 students take the test once, and the College's admission rule corresponds to one of the following cases:
 \begin{itemize}
     \item Reject anyone who submits a second score; Accept anyone who's (only) score is $A$: $u_A =1, u_{AB} = u_{AA} = 0$;
     \item Admit anyone whose first score is $A$: $u_A = u_{AB} = 1, u_{AA} = 0$; \ar{Is there a typo here, or is there really an equilibrium in which $u_{AB} = 1$ but $u_{AA} = 0$? This seems impossible.}\mn{This is an equilibrium since all Category 2 students would only take test once if their first score is $A$. $u_{AA} < u_{AB}$ only happens off the equilibrium path.}
     \item Admit anyone who submits \emph{only} $A$ scores: $u_A = u_{AA} = 1, u_{AB} = 0$.
 \end{itemize}

 In all cases above, the College rejects students if their first score is $B$, as $p < \frac{1}{2}$,  $u_B = u_{BB} = u_{BA} = 0$. Some category 2 students may take the test twice \ar{Is this repetitive of above? I thought we were talking about the 2nd case here in which students take the test only once} \mn{The cases above is what happens if the first score is $A$, and this part shows what's the admission decision if the first score is $B$, which is always rejection.} and $\bar{f}_H(B) \in [0,\frac{\bar{p}}{p}]$, $\bar{f}_L(B) \in [
 \frac{p\bar{f}_H(B)}{\bar{p}}, \frac{\alpha - p + p  \bar{\alpha} \bar{\phi} \bar{f}_H(B)}{\bar{p}\alpha  \bar{\phi}}]$.
 
 In all cases, however, the effect is that in equilibrium, Category 2 students are judged on only the first test score they submit, and so have no advantage (or disadvantage) relative to Category 1 students.
\end{proof}
}

\section{Comparisons}
In this section, we compare the separating equilibrium under ``Report Max'' with the (unique) first-score equilibrium under ``Report All'', both from the perspective of student equity, and from the perspective of the College's objective. 

We recall that the \emph{false negative} rate corresponds to the proportion of High type students who are rejected, and the \emph{false positive} rate corresponds to the proportion of Low type students who are accepted. It is straightforward to compare the false positive and false negative rates of the separating equilibrium under ``Report Max''  with the first-score equilibrium under  ``Report All'. We summarize the results in the table below.
\begin{table}[H]
    \centering
    \begin{tabular} {|c|c|c|c|} \hline
 FN &  $\bm{(1,H)}$  & $\bm{(2,H)}$  \\ \hline
\textbf{Max} & $1-\alpha$  &  $(1-\alpha)^2$     \\ \hline
\textbf{All}  & $1-\alpha$ &   $1-\alpha$   \\ \hline
\end{tabular}
\quad
    \begin{tabular} {|c|c|c|c|} \hline
FP & $\bm{(1,L)}$  & $\bm{(2,L)}$  \\ \hline
\textbf{Max} & $1-\alpha$  &  $1-\alpha^2$     \\ \hline
\textbf{All} & $1-\alpha$ &   $1-\alpha$   \\ \hline
\end{tabular}
    \caption{\label{tab:FNFP} False Negative (left) and False Positive (right) Rates Across Categories of Students}
\end{table}


We observe that the ``Report Max'' equilibrium favors the advantaged (Category 2) students, in that for each type $L,H$, their probability of admissions is strictly higher compared to the disadvantaged (Category 1) students of the same type. This manifests itself as both a higher false positive rate and a lower false negative rate, compared to the ``Report All'' equilibrium. In contrast, because the College in the ``Report All'' equilibrium makes decisions only as a function of the first test score, the probability of admissions conditional on type is identical across advantaged and disadvantaged students. This manifests itself as an identical false positive and false negative rate across categories. We can conclude that from the perspective of \emph{equity} across advantaged and disadvantaged students, the ``Report All" policy is preferable to the ``Report Max" policy. 

Next, we compare the \emph{positive predictive value} of the equilibrium outcomes for both policies --- i.e. the probability, in equilibrium, that a student is a High type, conditional on receiving admissions to the college. Higher positive predictive values will correspond to admissions outcomes with a higher proportion of High type students among the admitted class, and are hence desirable. We find that the positive predictive value is strictly higher for the ``Report All'' policy:

 \begin{theorem}\label{th:PPV_2}
    For any $\alpha \in (\frac{1}{2}, 1)$, $p < \frac{1}{2}$, the positive predictive value of the ``Report All'' policy strictly exceeds that of the ``Report Max'' policy in any nontrivial equilibrium.
    \end{theorem}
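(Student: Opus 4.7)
The plan is to compute the positive predictive value (PPV) in closed form for both equilibria and show that the equilibrium under ``Report All'' induces a \emph{higher likelihood ratio} $P(\text{admit}\mid H)/P(\text{admit}\mid L)$, which directly implies higher PPV since PPV is monotone in this ratio (with the prior $p$ held fixed).

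First, under the first-score equilibrium of ``Report All'', the admission event is simply ``first score $=A$''. Since the first test is a single draw with error $\alpha$ for both categories, the marginals aggregate cleanly and
\[
\text{PPV}_{\text{All}} \;=\; \frac{p\alpha}{p\alpha + \bar{p}\bar{\alpha}},
\]
so the conditional admission probabilities are $P(\text{admit}\mid H)=\alpha$ and $P(\text{admit}\mid L)=\bar{\alpha}$, giving likelihood ratio $\alpha/\bar\alpha$.

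Next, under the ``Report Max'' separating equilibrium, using \textit{Table \ref{tab:prob_maxtest_sep}} and grouping Category 1 and Category 2 contributions, I would compute
\begin{align*}
P(\text{admit}\mid H) &= \phi\alpha + \bar{\phi}(1-\bar{\alpha}^{2}) \;=\; \alpha\bigl(1+\bar{\phi}\bar{\alpha}\bigr),\\
P(\text{admit}\mid L) &= \phi\bar{\alpha} + \bar{\phi}(1-\alpha^{2}) \;=\; \bar{\alpha}\bigl(1+\bar{\phi}\alpha\bigr),
\end{align*}
so the ``Report Max'' likelihood ratio is $\tfrac{\alpha(1+\bar{\phi}\bar{\alpha})}{\bar{\alpha}(1+\bar{\phi}\alpha)}$. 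Comparing, the inequality $\tfrac{\alpha}{\bar{\alpha}} > \tfrac{\alpha(1+\bar{\phi}\bar{\alpha})}{\bar{\alpha}(1+\bar{\phi}\alpha)}$ is equivalent to $1+\bar{\phi}\alpha > 1+\bar{\phi}\bar{\alpha}$, which holds strictly because $\alpha>\tfrac{1}{2}>\bar\alpha$ and $\bar\phi>0$. Finally, I would note that PPV is an increasing function of the likelihood ratio at fixed prior $p$ (write $\text{PPV}=\bigl(1+\tfrac{\bar p}{p}\cdot\tfrac{P(\text{admit}\mid L)}{P(\text{admit}\mid H)}\bigr)^{-1}$), so the likelihood-ratio comparison transfers to a strict comparison of PPVs.

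The only subtlety — and the main thing to address carefully — is the phrase ``any nontrivial equilibrium.'' For ``Report All'' this is handled by \textit{Theorem \ref{th:all_2_eqm}}: every nontrivial equilibrium has the same first-score admission outcome, so the PPV computation above is equilibrium-invariant. For ``Report Max'', by the analysis of Section \ref{pool}, the only nontrivial equilibrium is the separating one treated above. Thus the comparison of the two closed-form expressions suffices for the theorem, and no residual case analysis over off-path beliefs is needed.
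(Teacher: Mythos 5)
Your proposal is correct and follows essentially the same route as the paper: both compute the two positive predictive values from the equilibrium characterizations (first-score admission for ``Report All'', best-of-two for ``Report Max'') and compare them, and your closed forms agree with the paper's after simplification. Your only difference is presentational---factoring the comparison through the likelihood ratio $P(\text{admit}\mid H)/P(\text{admit}\mid L)$ makes explicit the final inequality that the paper dismisses as ``straightforward to verify,'' and correctly isolates $\bar\phi>0$ as the source of strictness.
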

    \begin{proof}
    The positive predictive value of ``Report Max'' is 
$$
  \frac{\alpha \phi p + (1-\bar{\alpha}^2)\bar{\phi}p}{\alpha \phi p + \bar{\alpha} \phi \bar{p} + (1-\bar{\alpha}^2)\bar{\phi}p + (1-\alpha^2)\bar{\phi}\bar{p}} = \frac{(1 + \bar{\alpha}\bar{\phi})\alpha p}{ \alpha p + \bar{\alpha} \bar{p} + \alpha \bar{\alpha}\bar{\phi}}.$$

The positive predictive value of ``Report All'' coincides with the policy of only looking at the first score, so it has value 
$$\frac{\alpha p}{\alpha p + \bar{\alpha}\bar{p}}.$$
It is straightforward to verify that the first expression is strictly smaller than the second if $\alpha \in (0.5, 1)$.
\end{proof}

 For completeness, we compare the \emph{negative predictive value} of the College's admissions rule used under both policies. The negative predictive value is the probability, in equilibrium, that a student is a Low type, conditional on being rejected from the college. 
 
 \begin{theorem}
    For any $\alpha \in (\frac{1}{2}, 1)$ and $p < 0.5$,  the negative predictive value of the ``Report All'' policy is strictly smaller than that of the ``Report Max'' policy in any nontrivial equilibrium.
    \end{theorem}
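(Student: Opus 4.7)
The plan is to compute the negative predictive value (NPV) for each policy directly from the equilibrium characterizations already established, and then reduce the desired inequality to a one-line comparison that follows from $\alpha > 1/2$.

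First, I would read off the rejection probabilities from the tables. Under the separating equilibrium of Report Max (Table \ref{tab:prob_maxtest_sep}), a student is rejected iff her max score is $B$, so conditional on being type $H$ the rejection probability is $\bar{\alpha}$ for Category 1 and $\bar{\alpha}^2$ for Category 2, while for type $L$ it is $\alpha$ and $\alpha^2$ respectively. Integrating over the population,
\[
\Pr[\text{rej},L] = \bar{p}\alpha(\phi + \bar{\phi}\alpha), \qquad \Pr[\text{rej},H] = p\bar{\alpha}(\phi + \bar{\phi}\bar{\alpha}),
\]
so the NPV under Report Max is
\[
\mathrm{NPV}_{\max} = \frac{\bar{p}\alpha(\phi + \bar{\phi}\alpha)}{\bar{p}\alpha(\phi + \bar{\phi}\alpha) + p\bar{\alpha}(\phi + \bar{\phi}\bar{\alpha})}.
\]
Under the first-score equilibrium of Report All, Theorem~\ref{th:all_2_eqm} says a student is rejected iff her first score is $B$, independent of category, so $\Pr[\text{rej}\mid H]=\bar{\alpha}$ and $\Pr[\text{rej}\mid L]=\alpha$, giving
\[
\mathrm{NPV}_{\text{all}} = \frac{\bar{p}\alpha}{p\bar{\alpha} + \bar{p}\alpha}.
\]

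Next, I would compare the two by using the elementary fact that $\frac{A}{A+B}$ is monotonically decreasing in $B/A$. Thus $\mathrm{NPV}_{\text{all}} < \mathrm{NPV}_{\max}$ is equivalent to
\[
\frac{p\bar{\alpha}}{\bar{p}\alpha} \;>\; \frac{p\bar{\alpha}(\phi + \bar{\phi}\bar{\alpha})}{\bar{p}\alpha(\phi + \bar{\phi}\alpha)},
\]
which after cancellation reduces to $\phi + \bar{\phi}\alpha > \phi + \bar{\phi}\bar{\alpha}$, i.e.\ $\bar{\phi}(\alpha - \bar{\alpha}) > 0$. This holds strictly because $\alpha > 1/2$ (so $\alpha > \bar{\alpha}$) and $\bar{\phi} > 0$ (Category 2 is nonempty; otherwise the two policies coincide trivially).

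There is essentially no obstacle here; the proof is a direct dual of Theorem~\ref{th:PPV_2}. The only point worth flagging is the interpretation: Report Max admits more students of both types, which simultaneously dilutes the admitted pool with extra Low types (lowering PPV) and removes those same Low types from the rejected pool (raising NPV), so the two comparisons go in opposite directions, exactly as the algebra above confirms.
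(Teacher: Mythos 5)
Your computation of both negative predictive values matches the paper's exactly, and your reduction of the comparison to $\bar{\phi}(\alpha-\bar{\alpha})>0$ simply fills in the algebra that the paper dismisses as ``straightforward to verify.'' The proof is correct and takes the same approach as the paper; your remark that strictness requires $\bar{\phi}>0$ is a fair (if minor) point that the paper also leaves implicit.
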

    \begin{proof}
The negative predictive value of ``Report Max'' is
$$\frac{\alpha \phi \bar{p} + \alpha^2\bar{\phi}\bar{p}}{\bar{\alpha}\phi p + \alpha \phi \bar{p} + \bar{\alpha}^2\bar{\phi}p + \alpha^2\bar{\phi}\bar{p}}.$$
The negative predictive value of ``Report All'' is
$$\frac{\bar{p}\alpha}{p\bar{\alpha} + \bar{p}\alpha}.$$
It is straightforward to verify that the first is larger than the second for $\alpha > \frac{1}{2}$. 
\Xomit{
$$\frac{ \phi + \alpha(1-\phi)}{(1-\alpha)\phi p + \alpha \phi (1-p) + (1-\alpha)^2(1-\phi)p + \alpha^2(1-\phi)(1-p)} >
\frac{1}{p(1-\alpha) + (1-p)\alpha}$$
$$\alpha(1-\phi)>
\frac{(1-\alpha)^2(1-\phi)p + \alpha^2(1-\phi)(1-p)}{p(1-\alpha) + (1-p)\alpha}
$$
$$
\alpha >
\frac{(1-\alpha)^2p + \alpha^2(1-p)}{p(1-\alpha) + (1-p)\alpha}=
\frac{p(1-\alpha)}{p(1-\alpha) + (1-p)\alpha}(1-\alpha)+
\frac{(1-p)\alpha}{p(1-\alpha) + (1-p)\alpha}\alpha.
$$
}
\end{proof}

Finally, we compare the college's utility (i.e. its \emph{classification accuracy} for the task of distinguishing High and Low type students) in the equilibrium outcomes for both policies. We find that the College has higher utility under the ``Report All'' policy, demonstrating that not only is ``Report All'' better from the perspective of student equity, but it is strictly better from the perspective of the College as well. 

\begin{theorem}
   For any $\alpha \in (\frac{1}{2}, 1)$ and $p < 0.5$, the College's expected payoff per student under ``Report All'' exceeds that under ``Report Max'' in equilibrium.
\end{theorem}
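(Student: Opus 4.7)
The College's expected payoff per student equals $\Pr[\text{admit}, H] - \Pr[\text{admit}, L]$, since admitting a High type yields $+1$, admitting a Low type yields $-1$, and rejection yields $0$. My approach is to compute each of these two probabilities in closed form under both equilibrium outcomes and then verify that the difference in expected payoffs strictly favors ``Report All.''

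For ``Report Max'' (separating equilibrium), I will read the joint probabilities directly off \textit{Table \ref{tab:prob_maxtest_sep}}: summing the $A$-row gives $\Pr[\text{admit}, H] = p[\alpha\phi + (1-\bar{\alpha}^2)\bar{\phi}]$ and $\Pr[\text{admit}, L] = \bar{p}[\bar{\alpha}\phi + (1-\alpha^2)\bar{\phi}]$. For ``Report All,'' \textit{Theorem \ref{th:all_2_eqm}} guarantees that every equilibrium has the property that the admissions decision depends only on the first score; since the first-score distribution is identical for both categories, this collapses to $\Pr[\text{admit}, H] = \alpha p$ and $\Pr[\text{admit}, L] = \bar{\alpha}\bar{p}$, exactly as in a single-test regime. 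This reduction is what makes the comparison tractable and independent of any tie-breaking among the multiple ``Report All'' equilibria.

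The one real step is the algebraic simplification of the difference $U_{\mathrm{All}} - U_{\mathrm{Max}}$. Using the factorizations $1-\bar{\alpha}^2 = \alpha(1+\bar{\alpha})$ and $1-\alpha^2 = \bar{\alpha}(1+\alpha)$, the ``Report Max'' payoff collects as $p\alpha(1+\bar{\alpha}\bar{\phi}) - \bar{p}\bar{\alpha}(1+\alpha\bar{\phi})$, and subtracting this from $U_{\mathrm{All}} = \alpha p - \bar{\alpha}\bar{p}$ yields
\[
U_{\mathrm{All}} - U_{\mathrm{Max}} \;=\; \alpha\bar{\alpha}\bar{\phi}\,(1 - 2p),
\]
which is strictly positive whenever $p < 1/2$ and $\bar{\phi} > 0$ (the latter being implicit in the setup, since otherwise the two policies coincide).

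I do not anticipate a genuine obstacle: the work is bookkeeping on the two tables plus an algebraic identity. The payoff of the final form is interpretive rather than technical --- the College's loss from permitting superscoring scales linearly with the fraction $\bar{\phi}$ of high-resource students and with the noise product $\alpha\bar{\alpha}$, and vanishes in both the noiseless ($\alpha\to 1$) and no-Category-2 ($\bar{\phi}\to 0$) limits, which offers a clean sanity check on the computation.
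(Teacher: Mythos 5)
Your proposal is correct and follows essentially the same route as the paper: both compute the per-student payoff under ``Report Max'' as $p[\alpha\phi + (1-\bar{\alpha}^2)\bar{\phi}] - \bar{p}[\bar{\alpha}\phi + (1-\alpha^2)\bar{\phi}]$, compare it to $\alpha p - \bar{\alpha}\bar{p}$ for the first-score equilibrium of ``Report All,'' and reduce the difference to $\alpha\bar{\alpha}\bar{\phi}(1-2p) > 0$. The only difference is cosmetic --- you factor via $1-\bar{\alpha}^2 = \alpha(1+\bar{\alpha})$ while the paper groups by $\phi$ and $\bar{\phi}$ first --- and your closing sanity checks on the $\alpha \to 1$ and $\bar{\phi} \to 0$ limits are a nice addition.
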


\begin{proof}
The expected payoff per student under ``Report All'' is $\alpha p - \bar{\alpha}\bar{p}$. The expected payoff per student under ``Report Max'' is
$\phi(\alpha p - \bar{\alpha}\bar{p}) + \bar{\phi}[(1-\bar{\alpha}^2)p - (1-\alpha^2)\bar{p}].$ Hence, the difference between the College's expected payoff under these two schemes is:
\begin{align*}
&\alpha p - \bar{\alpha}\bar{p} - \{\phi(\alpha p - \bar{\alpha}\bar{p} + \bar{\phi}[(1-\bar{\alpha}^2)p - (1-\alpha^2)\bar{p}]\}   \\
=\quad &\bar{\phi}[\alpha p - \bar{\alpha}\bar{p} -  (1-\bar{\alpha}^2)p + (1-\alpha^2)\bar{p}]\\
=\quad &\bar{\phi} \alpha \bar{\alpha} (1-2p) > 0 
\end{align*}
where the last inequality holds when $\alpha\in (0,1), p < \frac{1}{2}$.
\end{proof}
Note that for all the quantities we have considered, equality occurs when $\alpha = 1$ (i.e., when the test is a perfect noiseless signal of student type). In this case, the College is able to admit exactly the High type students and reject exactly the Low type students under both reporting schemes, and  there is no advantage of one over the other either in terms of equity or accuracy. As the accuracy of the test $\alpha$ decreases, all the disparities we have measured grow monotonically.

\section{The General Case: More Tests, Wider Parameter Ranges}\label{moretwo}
Thus far we have assumed that the advantaged students (Category 2) are able to take up to two tests, and that High type students are rare --- i.e. that $p < \frac{1}{2}$. In this section, we generalize our results to a broader setting. Now, Category $2$ students can adaptively choose to take the test up to $k$ times for an arbitrary $k \geq 2$, and we consider the full parameter range $p \in (0,1)$ . As one might imagine, the set of possible equilibria under ``Report All'' expands. Nevertheless, we identify a wide range of parameters for which basing admission on the first reported test score only  continues to be an equilibrium outcome. This equilibrium outcome maintains the advantages of the ``Report All'' equilibrium outcome we studied in the special case of $k = 2$.  Moreover, we prove that \emph{all} equilibria of the ``Report All'' policy dominate the equilibrium of the ``Report Max'' policy in terms of both equity and accuracy: the College continues to strictly prefer outcomes under the ``Report All'' policy for all possible equilibria, and similarly, all such equilibria have smaller false positive and false negative gaps between Categories of students, as compared to the Report Max equilibrium. 

\subsection{Report Max}
We start by characterizing the equilibrium outcome under ``Report Max''. 
As before we focus on a separating equilibrium in which the College accepts a student if the reported score is $A$ and rejects otherwise. Under this separating equilibrium, assuming it exists,  only the best score will be reported. Therefore Category 2 students will take the test as often as needed to get an $A$ score. Therefore a $(2,L)$ student will  report $B$ with probability $\alpha^k$, and a $(2,H)$ student will report $A$ with probability $1- (1-\alpha)^k$. 

\begin{theorem}\label{th: max_multi_eqm}
Let  $\hat{p}_{k} =
\frac{\phi (1-\alpha) + (1-\phi) (1-\alpha^k)}{\phi + (1-\phi) [2 -\alpha^k -(1-\alpha)^k]}$ and $\hat{p}'_{k} =
\frac{\phi \alpha + (1-\phi) \alpha^k}{\phi + (1-\phi) [\alpha^k +(1-\alpha)^k]}$ for any $k\geq 2$. ``Report Max'' has a nontrivial (separating) equilibrium if and only if 
$$p \in [\hat{p}_{k}, \hat{p}'_{k}].$$
Of course, the nontrivial equilibrium is unique: the College accepts a student if the reported score is $A$ and rejects otherwise. Category 2 students take the exam as many times as they need to get an $A$ score (up to $k$ times).
\end{theorem}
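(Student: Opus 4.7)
The plan is to mimic the $k=2$ separating-equilibrium proof, but with the generalized probabilities that arise when a Category 2 student can retest up to $k$ times. First I would construct the analogue of Table~\ref{tab:prob_maxtest_sep}: under the conjectured separating rule (accept iff the reported score is $A$), a Category 2 student optimally keeps testing until an $A$ appears or $k$ attempts are exhausted, since additional tests can only improve the reported max. Consequently a $(2,H)$ student reports $A$ with probability $1-\bar\alpha^k$ and $B$ with probability $\bar\alpha^k$, while a $(2,L)$ student reports $A$ with probability $1-\alpha^k$ and $B$ with probability $\alpha^k$. Category~1 students report $A$ with probability $\alpha$ (if $H$) or $\bar\alpha$ (if $L$), exactly as before.

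Next I would verify that the admissions rule is a best response to these induced marginals. For the college to admit on $A$, we need the posterior
\[
p_A \;=\; \frac{\phi\alpha p+\bar\phi(1-\bar\alpha^k)p}{\phi\alpha p+\bar\phi(1-\bar\alpha^k)p+\phi\bar\alpha\bar p+\bar\phi(1-\alpha^k)\bar p}\;\geq\;\tfrac{1}{2},
\]
and rearranging this inequality yields $p\geq \hat p_k$. Symmetrically, requiring that the posterior on $B$,
\[
p_B \;=\; \frac{\phi\bar\alpha p+\bar\phi\bar\alpha^k p}{\phi\bar\alpha p+\bar\phi\bar\alpha^k p+\phi\alpha\bar p+\bar\phi\alpha^k\bar p},
\]
be at most $\tfrac{1}{2}$ rearranges to $p\leq \hat p'_k$. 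So the two cutoffs fall out directly from Bayes-consistent best responses at the two reported scores.

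For the other direction, I would argue that if $p<\hat p_k$ then even students who manage to produce an $A$ are rejected, and so no admissions policy supported by pure strategies can accept anyone, leaving only the trivial reject-all equilibrium; symmetrically, if $p>\hat p'_k$ then even students reporting $B$ are admitted, yielding the trivial accept-all equilibrium. Finally I would note that $\hat p_k\leq \hat p'_k$ (this follows because the numerator inequality corresponding to $p_A\geq 1/2$ and $p_B\leq 1/2$ both reduce to conditions on the same high-vs-low likelihood ratios, and a short calculation shows the cutoffs are ordered), and that the interval collapses to the earlier $[\hat p,\tfrac12]$ when $k=2$ is plugged in, giving a consistency check.

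The main obstacle, and the only place where the argument substantively differs from the $k=2$ case, is verifying student best responses off the equilibrium path: we must rule out the possibility that some Category~2 student would want to \emph{stop} early after a $B$, or stop after an $A$ rather than continuing. Stopping after $A$ is clearly weakly dominant (it already guarantees admission), and stopping after a run of $B$'s is weakly dominated as long as admission is monotone in the reported max (which it is under the conjectured rule). So the only real check is that no deviation to a mixed or adaptive stopping rule strictly improves a student's admission probability, which follows because the college's decision depends only on the max reported score and more tests can only weakly increase that max.
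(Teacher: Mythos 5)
Your proposal is correct and follows essentially the same route as the paper: set up the reporting probabilities under the separating rule (a $(2,H)$ student reports $A$ with probability $1-\bar\alpha^k$, a $(2,L)$ student with probability $1-\alpha^k$), then show the college's posterior after $A$ exceeds $\tfrac12$ exactly when $p\geq\hat p_k$ and the posterior after $B$ is below $\tfrac12$ exactly when $p\leq\hat p'_k$. The extra remarks on student stopping behavior and the only-if direction are consistent with, and slightly more explicit than, what the paper writes.
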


Note that the parameter range in which a separating equilibrium exists is always nontrivial since $\hat{p}'_{k} > \alpha > \frac{1}{2}> \hat{p}_{k} > \bar{\alpha}$ for all $k \geq 2$. Observe also that $\hat{p}_{k}$ strictly increases in $k$, which captures the intuition that under ``Report Max", increased testing makes a report of $A$ less indicative of a high type.

\begin{proof} We can calculate that the probability that a student is of type $H$ given they report $A$ is
$$\frac{\alpha \phi p + (1-\bar{\alpha}^k)\bar{\phi} p}{\alpha \phi p  + \bar{\alpha} \phi \bar{p} +(1- \bar{\alpha}^k)\bar{\phi}p+ (1-\alpha^k)\bar{\phi}\bar{p} }.$$
If this exceeds 0.5, then anyone who reports $A$ is admitted by the College. This happens exactly when:
\begin{align*}
    \Big[\alpha \phi  + (1-\bar{\alpha}^k)\bar{\phi} \Big]p &\geq  \Big[\bar{\alpha} \phi + (1-\alpha^k)\bar{\phi} \Big](1-p)\\
   \Rightarrow p &\geq \hat{p}_{k}
\end{align*}

The probability of being a high type given a report of $B$ is
$$\frac{\bar{\alpha} \phi p + \bar{\alpha}^k\bar{\phi}p}{ \bar{\alpha} \phi p + \alpha \phi \bar{p}+\bar{\alpha}^k \bar{\phi} p+\alpha^k \bar{\phi} \bar{p}}.$$
If this is less than 0.5, then, anyone with an $B$ score is rejected. This happens exactly when
\begin{align*}
   \Big[ \bar{\alpha} \phi  + \bar{\alpha}^k\bar{\phi} \Big]p
&\leq 
 \Big[\alpha \phi+\alpha^k \bar{\phi}\Big] (1-p)\\
 \Rightarrow p &\leq \hat{p}'_{k}.
\end{align*}
\end{proof}

\subsection{Report All}\label{sec: report all_multi}
We now turn to the ``Report All'' policy. We find that within a  wide range of parameters ($p \in [1-\alpha,\alpha]$) there still \emph{exists} an equilibrium of the sort that we had in the $k = 2$ case --- namely, in which the College makes decisions only based on the first test score, and hence which treats both populations equally. We also characterize the parameter range in which other non-trivial equilibria exist. 
\begin{theorem}\label{th: all_eqm}
   Let $p^*_k =  \frac{\bar{\alpha}^{k-2}}{\alpha^{k-2} + \bar{\alpha}^{k-2}}$ for $k\geq 2$. Under ``Report All", for any $\alpha \in (\frac{1}{2},1]$,
   \begin{enumerate}
       \item There exists an equilibrium in which the admission outcome depends solely on the first score if and only if $$p \in [1-\alpha, \alpha].$$
       
       \item There exists an equilibrium in which the admission outcome depends on more than the first score if and only if $$p \in [p^*_{k+2}, 1-\alpha] \cup [p^*_k, \alpha].$$
       
       \item For any $p \in (1-\alpha, \alpha)$, a reported (single) score of $A$ yields admission and a reported score sequence that consists entirely of $B$ scores yields rejection in \emph{all} equilibria. 
   \end{enumerate}
\end{theorem}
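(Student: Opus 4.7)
The approach is to characterize equilibria by the joint conditions of Bayesian optimality for the College and best-response stopping for Category 2 students. Let $u_s \in \{0,1\}$ denote the admission indicator on reported sequence $s$, $p_s$ the induced posterior that the student is type $H$, and $(f_H(\cdot), f_L(\cdot))$ the Category 2 stopping probabilities (these determine the posteriors via the law of total probability, extending Table \ref{tab:prob_alltests} to length-$k$ sequences). In equilibrium, $u_s = 1$ iff $p_s > 1/2$, with ties broken arbitrarily. I would prove the parts in the order (3), (1), (2), since Part 3 pins down the admission of the extremal sequences and thereby constrains the admission rules considered in the other two parts. For Part 3, I would adapt the two-case contradiction from Theorem \ref{th:all_2_eqm}: assume $u_A = 0$ for contradiction. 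Either the College rejects every $A$-continuation, in which case the law of total probability applied to the reporting distribution conditional on a first-score $A$ gives $p\alpha \leq \tfrac12(p\alpha + \bar p\bar\alpha)$ and hence $p \leq 1-\alpha$; or some $A$-continuation is admitted, in which case every Category 2 student with first score $A$ strictly prefers continuing, so single-$A$ reports come only from Category 1 and $p_A = p\alpha/(p\alpha + \bar p\bar\alpha) \leq 1/2$, again yielding $p \leq 1-\alpha$. A symmetric argument rules out $u_{B^k} = 1$ when $p < \alpha$.

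For Part 1, the \emph{sufficiency} direction is established by the profile in which all Category 2 students take only one test and the College admits iff the first reported score is $A$: the on-path posteriors reduce to $p_A = p\alpha/(p\alpha + \bar p\bar\alpha)$ and $p_B = p\bar\alpha/(p\bar\alpha + \bar p\alpha)$, which satisfy $p_A \geq 1/2$ and $p_B \leq 1/2$ exactly when $p \in [1-\alpha, \alpha]$; the multi-score posteriors are off-path and can be freely assigned to preserve the admission rule. For \emph{necessity}, any nontrivial first-score-only equilibrium must have $u_A = 1$ and $u_B = 0$, and Bayesian optimality of these two choices forces $p \in [1-\alpha, \alpha]$ by the same calculation.

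For Part 2, an equilibrium in which admission depends on more than the first score must have some Category 2 student strictly preferring to continue past an on-path node with $u = 0$; this originates either from $u_A = 0$ (the \emph{$A$-continuation} case) or from $u_B = 0$ with some admitted $B$-continuation (the \emph{$B$-continuation} case). In the $A$-continuation case, both types strictly prefer continuing after $A$, so $f_H(A) = f_L(A) = 0$ and the Bayesian condition $p_A \leq 1/2$ forces $p \leq 1-\alpha$; the most $H$-indicative reachable sequence is $A^k$, with on-path posterior $p\alpha^k/(p\alpha^k + \bar p\bar\alpha^k) \geq 1/2$ iff $p \geq p^*_{k+2}$. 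In the $B$-continuation case, analogously $f_H(B) = f_L(B) = 0$ and $p_B \leq 1/2$ gives $p \leq \alpha$, while the critical sequence $BA^{k-1}$ has posterior $p\alpha^{k-2}/(p\alpha^{k-2} + \bar p\bar\alpha^{k-2}) \geq 1/2$ iff $p \geq p^*_k$. Combining the two cases yields the two intervals stated.

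The main obstacle is establishing in Part 2 that these extremal sequences, $A^k$ and $BA^{k-1}$, are indeed the binding ones and that no alternative admission pattern can extend the existence range. I would address this by monotonicity: for fixed on-path Category 2 stopping behavior, the posterior at a length-$j$ sequence with $m$ As is monotone increasing in $m$ (each additional $A$ multiplies the likelihood ratio by $\alpha/\bar\alpha > 1$), so the admitted sequence that minimizes the required $p$ is the one with the most As, and among those, the longest (maximizing the exponent of $\alpha/\bar\alpha$ in the posterior ratio). This picks out $A^k$ in the $A$-continuation case and $BA^{k-1}$ in the $B$-continuation case. A secondary concern is verifying off-path belief consistency for non-extremal admission rules; standard arguments show such beliefs can always be set to support the admission rule whenever the on-path Bayesian conditions are met, so this does not widen the parameter range.
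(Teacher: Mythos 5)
You take essentially the same route as the paper: characterize the college's rule by posterior thresholds, split on whether the first score is $A$ or $B$, use the law of total probability over continuations, and locate the extremal sequences $A^{k}$ and $BA^{k-1}$, whose natural posteriors give exactly the thresholds $p^*_{k+2}$ and $p^*_k$. Your Part 3 and Part 1 arguments and the necessity half of Part 2 are correct and coincide with what the paper does; the only organizational difference is that the paper packages your ``continue while you have nothing to lose'' observation as Lemma \ref{lem:conti} (an admitted on-path sequence $s$ must have $p_{s*}\geq \frac{1}{2}$, proved via the longest continuation that cannot result in rejection), which it then reuses in each case.

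Two pieces are missing relative to what the theorem asserts. First, Part 2 is an if-and-only-if and your write-up only delivers the ``only if'': you show that any non-first-score equilibrium forces $p$ into $[p^*_{k+2},1-\alpha]\cup[p^*_k,\alpha]$, but you never exhibit such an equilibrium for a given $p$ in those intervals. The paper does this with the partition $p^*_1=\alpha>p^*_2=\frac{1}{2}>\cdots>p^*_k$: for $p\in[p^*_n,p^*_{n-1}]$ it admits every sequence starting with $A$ and, among $B$-starting sequences, exactly $BA^{n-1}$; one then checks that all first-score-$B$ students keep testing (so the on-path posterior at $BA^{n-1}$ is the natural one, which is $\geq\frac{1}{2}$ precisely when $p\geq p^*_n$) and that single-$B$ reporters are Category 1 only (so $p_B\leq\frac{1}{2}$ iff $p\leq\alpha$). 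Some version of this construction is required. Second, Part 3 claims that \emph{every} all-$B$ sequence $B^j$, $1\leq j\leq k$, is rejected; your symmetric argument disposes of $j=1$, but the intermediate lengths need the short induction the paper runs with $M_B$: if $B^{M_B}$ is rejected for all lengths up to $M_B$ but $B^{M_B+1}$ is admitted, then all students holding only $B$'s continue to $M_B+1$ tests, and the on-path posterior there is $\bar{\alpha}^{M_B+1}p/(\bar{\alpha}^{M_B+1}p+\alpha^{M_B+1}\bar{p})<\frac{1}{2}$ whenever $p<\alpha$, a contradiction. Neither gap requires a new idea, but both are needed for the statement as written.
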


\begin{proof}
We write $p_s \in [0,1]$ to denote the college's posterior belief that the student is of High type after observing a reported score $s\in \cup_{i=1}^{k} \{A,B\}^i$.\footnote{$\cup_{i=1}^{k} \{A,B\}^i = \{A, B, AA, AB, BA, BB, \cdots, \underbrace{A \hdots A}_{k}, \cdots, \underbrace{B \hdots B}_{k}\}$ denotes the set of all possible score sequences that can result from taking the test up to $k$ times.} For the analysis, we will also be interested in  the proportion of high type students among all students whose reported scores \emph{start} with $s$ (and might have an arbitrary continuation). We denote this by $p_{s*} \in [0,1]$ with the convention that $p_{s*} = 0$ if no score starts in $s$ in equilibrium. 
 For any $s$ the optimal admissions policy for the college remains:
$$u_s = \begin{cases}
    1  &\quad \text{ if } p_s > \frac{1}{2},\\
    0 \text { or } 1 &\quad \text{ if } p_s= \frac{1}{2},\\
    0 &\quad \text{ if } p_s < \frac{1}{2}.
    \end{cases}$$ 
    
It will be useful to group testing outcomes by whether they consist of a single score, or more than one score. This distinction is important because Category 2 students can  pool with Category 1 students only if they take the test only once. Just as in \textit{Table \ref{tab:prob_alltests}}, we display the probabilities of testing outcomes in \textit{Table \ref{tab:prob_alltests_multi}} in which $s*$ denotes the set of all scores starting in $s$.

\begin{table}[H]
\begin{tabular} {|c|c|c|c|c|} \hline
Score / Type & $\bm{(1,H)}$ & $\bm{(1,L)}$ & $\bm{(2,H)}$ & $\bm{(2,L)}$ \\ \hline
 $\bm{A}$ 
 & $\alpha \phi p$  &$\bar{\alpha} \phi \bar{p}$ & $\alpha\bar{\phi}pf_H(A)$ & $\bar{\alpha}\bar{\phi}\bar{p}f_L(A)$     \\ \hline
 $\bm{B}$ 
 & $\bar{\alpha} \phi p$ & $\alpha \phi \bar{p}$ & $\bar{\alpha}\bar{\phi}pf_H(B)$ &$\alpha\bar{\phi}\bar{p}f_L(B)$  \\ \hline
 $\bm{AA*}$ 
 &0 &0 &$\alpha^2\bar{\phi}p\bar{f}_H(A)$ &$\bar{\alpha}^2\bar{\phi}\bar{p}\bar{f}_L(A)$   \\ \hline
$\bm{AB*}$ 
 & 0&0 &$\alpha\bar{\alpha}\bar{\phi}p\bar{f}_H(A)$ & $\alpha\bar{\alpha}\bar{\phi}\bar{p}\bar{f}_L(A)$  \\ \hline
 $\bm{BA*}$ 
 & 0&0 &$\alpha\bar{\alpha}\bar{\phi}p\bar{f}_H(B)$ & $\alpha\bar{\alpha}\bar{\phi}\bar{p}\bar{f}_L(B)$  \\ \hline
 $\bm{BB*}$ 
 &0 &0 & $\bar{\alpha}^2\bar{\phi}p\bar{f}_H(B)$& $\alpha^2\bar{\phi}\bar{p}\bar{f}_L(B)$  \\ \hline
 \textit{Total }
 & $\phi p$ & $\phi \bar{p}$ & $\bar{\phi}p$ & $\bar{\phi}\bar{p}$  \\ \hline
\end{tabular}
\caption{\label{tab:prob_alltests_multi}Distribution of Testing Outcomes with k Tests Available}
\end{table}

    From \textit{Table \ref{tab:prob_alltests_multi} }
    , we can compute:
\begin{align*}
    p_A &= \frac{\phi p \alpha + \bar{\phi} p \alpha f_H(A)}{\phi(p \alpha + \bar{p}\bar{\alpha}) + \bar{\phi}( p \alpha f_H(A) + \bar{p}\bar{\alpha} f_L(A))} \in (0, 1),\\
    p_B &= \frac{\phi p \bar{\alpha} + \bar{\phi} p \bar{\alpha} f_H(B)}{\phi(p \bar{\alpha} + \bar{p}\alpha) + \bar{\phi}( p \bar{\alpha} f_H(B) + \bar{p}\alpha f_L(B))} \in (0, 1), \\
    p_{AA*} &= \frac{p \alpha^2 \bar{f}_H(A)}{p \alpha^2 \bar{f}_H(A) + \bar{p}\bar{\alpha}^2\bar{f}_L(H)} \mathbbm{1} \{\bar{f}^2_H(A) + \bar{f}^2_L(A)>0\},\\
    p_{AB*} &= \frac{p \bar{f}_H(H)}{p \bar{f}_H(A) + \bar{p}\bar{f}_L(H)} \mathbbm{1}\{ \bar{f}^2_H(A) + \bar{f}^2_L(A)>0\}, \\
    p_{BA*} &= \frac{p \bar{f}_H(B)}{p \bar{f}_H(B) + \bar{p}\bar{f}_L(B)} \mathbbm{1}\{\bar{f}^2_H(B) + \bar{f}^2_L(B)>0\},\\
    p_{BB*} &= \frac{p \bar{\alpha}^2 \bar{f}_H(B)}{p \bar{\alpha}^2 \bar{f}_H(B) + \bar{p}\alpha^2\bar{f}_L(B)}\mathbbm{1}\{\bar{f}^2_H(B) +\bar{f}^2_L(B)>0\}
\end{align*}
where $\mathbbm{1}\{\cdot\}$ is the indicator function whose value is 1 if the statement in brackets is true and 0 otherwise. Note that for any $s \in \{A,B\}, p_{sB*} \leq p_{sA*}$ if $\bar{f}^2_H(s) + \bar{f}^2_L(s)>0$ (i.e., some students take the test more than once when the first score is $s$). Furthermore, to identify the relationship between the value of $p_{s*}$ and the matching equilibrium admissions decisions for a reported score sequence $s$, we show the following useful lemma.

\begin{lemma}\label{lem:conti}
If a score sequence $s \in \cup_{i = 1}^{k}\{A,B\}^i$ yields admission on the equilibrium path\footnote{In other words, $s$ is obtained with positive probability in equilibrium.},  then $p_{s*} \geq \frac{1}{2}$. Additionally, if $p_{s*} > \frac{1}{2}$, the score sequence $s$ yields admission for $s \in \{A,B\}$.
\end{lemma}

\begin{proof}
If all score sequences starting with $s$ yield admission, then by \textit{the law of total probability}, $p_{s*} \geq \frac{1}{2}$ as desired. 

Otherwise, the college rejects some score sequence starting in $s$ after some number of tests. We write $T$ to denote the length of the longest continuation of $s$ that \emph{cannot} result in rejection i.e., 
$$T = |s| + \max\{m \in \mathbbm{N} | u_{s\Tilde{s}} =1, \forall \Tilde{s} \in \cup_{i=1}^m \{A,B\}^i\}$$ 
where $|s|$ denotes the length of the score sequence $s$. If the set $\{m \in \mathbbm{N} | u_{s\Tilde{s}} =1, \forall \Tilde{s} \in \cup_{i=1}^m \{A,B\}^i\}$ is empty, then $T = |s|$.  After having taken $T$ tests, students with score sequences starting with $s$ would not take any further test, since if they stop, they are guaranteed admissions, but if they continue they risk rejection.  Therefore, in equilibrium we have $p_{s*} \geq \frac{1}{2}$. 

The proof above gives us a necessary condition for $s$ to yield admission on the equilibrium path, that is, $p_{s*} \geq \frac{1}{2}$\footnote{This condition alone may not suffice to tell us whether or not $s$ yields admission in equilibrium. For example, if $p_{BB*} \geq \frac{1}{2}$ (i.e. $p \geq \frac{\alpha^2}{\alpha^2 + \bar{\alpha}^2}$), there may exist an equilibrium in which only $A$ and $BBA$ yield admission. Hence, $p_{BB*} \geq \frac{1}{2}$ alone is not sufficient to derive that $BB$ is admitted in equilibrium.}. Now we investigate a special case of single scores, and characterize the sufficient condition for $s \in \{A,B\}$ to be admitted. We will show by contradiction that $s$ yields admission given $p_{s*} > \frac{1}{2}$ for some $s \in \{A,B\}$. If, otherwise, a length-one score sequence $s$ yields rejection, then we have the following two cases: (i) if the College rejects all scores starting in $s$, then by the \textit{law of total probability}, $p_{s*} \leq \frac{1}{2}$, a contradiction; (ii) if the College admits students submitting scores starting in $s$ for \emph{some} continuation, then all category 2 students take the test more than once if their first score is $s$, and thus $p_{s} = p_{s*} > \frac{1}{2}$, a contradiction again. Therefore, as long as $p_{s*} > \frac{1}{2}$, $s\in \{A,B\}$ yields admission. 
\end{proof}

With this lemma in hand,  we are now in position to discuss the equilibrium outcomes when $k$ tests are available. By \textit{Lemma \ref{lem:conti}}, a single score $A$ yields admission (i.e., $u_A = 1$) if $p>\bar{\alpha}$ and only if $p \geq \bar{\alpha}$ (i.e., $p_{A*} = \frac{p\alpha}{p\alpha + \bar{p} \bar{\alpha}} \geq \frac{1}{2}$). A single score $B$ yields admission (i.e., $u_B = 1$) if $p > \alpha$ and only if $p \geq \alpha$ (i.e., $p_{B*} = \frac{p\bar{\alpha}}{p\bar{\alpha} + \bar{p}\alpha} \geq \frac{1}{2}$). Therefore, for a non-trivial admission outcome to depend only on the first score, we have $p \in [\bar{\alpha}, \alpha]$. 

Now we characterize the condition for a non-first-score equilibrium to exist. Suppose a single $B$ yields admission, then $p_{A*} > p_{B*} \geq \frac{1}{2}$, and thus a single $A$ also yields admission and all students are admitted. Therefore, for an equilibrium admission rule to depend on more than one score, we must have that a single $B$ yields rejection, i.e. $p \leq \alpha$. Furthermore, for some subsequent score after $B$ to matter in the admission outcome, that is, to make some score sequence starting in $B$ yield admission, we need $\frac{p \bar{\alpha} \alpha^{k-1}}{p \bar{\alpha} \alpha^{k-1} + \bar{p} \alpha \bar{\alpha}^{k-1}} \geq \frac{1}{2}$. To see why, we discuss the following two cases:
\begin{enumerate}
    \item If either $BA$ or $BB$ gives admission, then Category 2 students whose first score is $B$ always take a second test, and thus by \textit{Lemma \ref{lem:conti}}, $ \max\{p_{BA*}, p_{BB*}\} = p \geq \frac{1}{2}$. Then we have $\frac{p \bar{\alpha} \alpha^{k-1}}{p \bar{\alpha} \alpha^{k-1} + \bar{p} \alpha \bar{\alpha}^{k-1}} \geq p \geq \frac{1}{2}$ as desired.
    
    \item If, otherwise, both $BA$ and $BB$ give rejection, then denote by $M$ the maximum length of a continuation of tests such that \emph{every} continuation of length $\leq M$ results in rejection\footnote{The maximum exists since the set $\{m \in \mathbbm{N} | u_{Bs} =0, \forall s \in \cup_{i=1}^m \{A,B\}^i\}$ is nonempty (the College rejects both $BA$ and $BB$) and compact (the number of scores is an integer with the upper bound $k-1$).}, i.e.,
$$M = \max\{m \in \mathbbm{N} | u_{Bs} =0, \forall s \in \cup_{i=1}^m \{A,B\}^i\}.$$
Observe that we have $1 \leq M \leq k-2$ if some score beginning in $B$ yields admission. Students who take tests no more than $M+1$ times with a first score of $B$ are rejected. By definition, there exists $s \in \{A,B\}$ such that $B \Tilde{s} s$ gives admission for some $\Tilde{s} \in \cup_{i=1}^M \{A,B\}^i $. Thus $f_H(B\Tilde{s}) = f_L(B\Tilde{s}) = 0$, and thus $p_{B\Tilde{s}s*} \geq \frac{1}{2}$ by \textit{Lemma \ref{lem:conti}}. Let $G$ be the number of $A$ scores in $B\Tilde{s}s$ (Note that $0 \leq G \leq M+1$). We have $\frac{p \bar{\alpha} \alpha^{k-1}}{p \bar{\alpha} \alpha^{k-1} + \bar{p} \alpha \bar{\alpha}^{k-1}} \geq \frac{p \bar{\alpha} \alpha^{M+1}}{p \bar{\alpha} \alpha^{M+1} + \bar{p} \alpha \bar{\alpha}^{M+1}} \geq p_{BMs*} = \frac{p \bar{\alpha}^{M+2-G} \alpha^{G}}{p \bar{\alpha}^{M+2-G} \alpha^{G} + \bar{p} \alpha^{M+2-G} \bar{\alpha}^{G}} \geq \frac{1}{2}$.
\end{enumerate}
Therefore, $\frac{p \bar{\alpha} \alpha^{k-1}}{p \bar{\alpha} \alpha^{k-1} + \bar{p} \alpha \bar{\alpha}^{k-1}} \geq \frac{1}{2}$ (i.e., $p \geq p^*_k \equiv \frac{\bar{\alpha}^{k-2}}{\alpha^{k-2} + \bar{\alpha}^{k-2}}$) is a \textit{necessary} condition for some subsequent score after $B$ to affect the admission outcome.

To see why $p \in [p^*_k, \alpha]$ is also \textit{sufficient} for a non-first-score equilibrium to arise under ``Report All", define a partition $\{{p}^*_n\}_{n = 1}^{k}$ over the range $[p^*_k, \alpha]$ in which ${p}^*_n = \frac{\bar{\alpha}^{n-2}}{\alpha^{n-2} + \bar{\alpha}^{n-2}}$ for all $n = 1, \cdots, k$ \footnote{Note that ${p}^*_1 = \alpha$, ${p}^*_2 = \frac{1}{2}$, $\cdots$, $p^*_k \equiv \frac{\bar{\alpha}^{k-2}}{\alpha^{k-2} + \bar{\alpha}^{k-2}}$, and $\cup_{n=1}^{k} [p^*_n, p^*_{n+1}] = [p^*_k,\alpha]$.}. If $p \in [p^*_{n-1}, p^*_{n}]$, there is a non-first-score equilibrium in which the College accepts students with first score $A$, and rejects all students with first score $B$ unless their sequence of scores is  $B\underbrace{A \hdots A}_{n-1}$. In this equilibrium, Category 2 students take at least two tests if their first score is $B$, and both score sequences starting with $A$ and those realizing $B\underbrace{A \hdots A}_{n-1}$ yield admission. 

Similarly, we can derive the sufficient and necessary condition for some subsequent score after $A$ to matter in the admission outcome, that is, $p \in [\frac{\bar{\alpha}^{k}}{\alpha^{k} + \bar{\alpha}^{k}}, \bar{\alpha}]$. In this case, students with a single score are rejected.

Finally, we can study possible equilibrium outcomes if $p \in (\bar{\alpha}, \alpha)$. By Lemma \ref{lem:conti}, if $p > \bar{\alpha}$ (i.e., $p_{A*} > \frac{1}{2}$), a single $A$ yields admission; and if $p < \alpha$ (i.e., $p_{B*} < \frac{1}{2}$), a single $B$ yields rejection.  What remains to be proved is that any score sequence consisting only of $B$'s leads to rejection. Denote\footnote{$\cup_{i=1}^m \{B\}^i = \{B, BB, BBB, \cdots, \underbrace{B \hdots B}_{m}\}.$}
$$M_B = \max\{m \in \mathbbm{N} | u_{s} =0, \forall s \in \cup_{i=1}^m \{B\}^i\}.$$
Note that  we have $1 \leq M_B \leq k$ by definition. To show $M_B = k$, suppose by contradiction that $M_B \leq k-1$. By definition, students with score sequence $\underbrace{B\hdots B}_{M_B}$ are  rejected and students with score sequence $\underbrace{B\hdots B}_{M_B + 1}$ are admitted. Therefore Category 2 students who have thus far only obtained $B$ scores won't stop until they take $M_B + 1$ tests. Therefore we have 
$$p_{\underbrace{B\hdots B}_{M_B + 1}} = \frac{\bar{\alpha}^{M+1} p}{\bar{\alpha}^{M+1} p + \alpha^{M+1} \bar{p}} < \frac{\bar{\alpha}^{M} p}{\bar{\alpha}^{M} p + \alpha^{M} \bar{p}} = p_{\underbrace{B\hdots B}_{M_B}} \leq \frac{1}{2},$$
a contradiction.
\end{proof}

A direct result of \textit{Theorem \ref{th: all_eqm}} is that when $p \in(\bar{\alpha}, p^*_k)$, the admission outcome is unique and it depends only on the first score. Note that $ p^*_k > \bar{\alpha}$ only when $k = 2$. This gives us the following corollary, which is a generalization of \textit{Theorem \ref{th:all_2_eqm}}. 

\begin{corollary} \label{cor:report all_multi}
If $k = 2$ and $p \in (1-\alpha, p^*_2) = (1-\alpha, \frac{1}{2})$, the unique equilibrium under ``Report All" is the first-score equilibrium.
If $k \geq 3$, a first-score equilibrium always coexists with an equilibrium in which the admission outcome can be affected by more than the first score.
\end{corollary}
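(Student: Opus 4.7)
The plan is to deduce both parts of the corollary as direct consequences of the characterization in Theorem \ref{th: all_eqm}, by comparing the ranges of $p$ in which first-score and non-first-score equilibria exist. The only substantive work is a short computation of $p^*_k$ and a comparison to the endpoints $1-\alpha$ and $\alpha$.

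First, I would handle the $k=2$ case. Substituting $k=2$ gives $p^*_2 = \tfrac{\bar\alpha^{0}}{\alpha^{0}+\bar\alpha^{0}} = \tfrac{1}{2}$, and likewise $p^*_{k+2} = p^*_4 = \tfrac{\bar\alpha^{2}}{\alpha^{2}+\bar\alpha^{2}} < 1-\alpha$ (since $\alpha > 1/2$ makes $\bar\alpha^{2}/(\alpha^{2}+\bar\alpha^{2}) < \bar\alpha$). By part (2) of Theorem \ref{th: all_eqm}, a non-first-score equilibrium exists only when $p \in [p^*_4, 1-\alpha]\cup[\tfrac{1}{2},\alpha]$, and this set is disjoint from the open interval $(1-\alpha,\tfrac{1}{2})$. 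Meanwhile, by part (1), a first-score equilibrium exists throughout $[1-\alpha,\alpha]$ and in particular on $(1-\alpha,\tfrac{1}{2})$. Combined with part (3), which guarantees admission after a single $A$ and rejection after any all-$B$ sequence in every equilibrium in this range, these two observations together imply that the first-score equilibrium is the unique equilibrium outcome.

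For the $k\geq 3$ case, the goal is to show that the interval $[1-\alpha,\alpha]$ where the first-score equilibrium exists is contained in the non-first-score existence range $[p^*_{k+2},1-\alpha]\cup[p^*_k,\alpha]$; in fact it suffices to show $[1-\alpha,\alpha]\subseteq[p^*_k,\alpha]$, i.e. that $p^*_k \leq 1-\alpha$. The key inequality
\[
\frac{\bar\alpha^{k-2}}{\alpha^{k-2}+\bar\alpha^{k-2}} \leq \bar\alpha
\]
is equivalent, after clearing denominators and cancelling $\alpha\bar\alpha$, to $\bar\alpha^{k-3}\leq \alpha^{k-3}$, which holds for all $k\geq 3$ since $\bar\alpha<\alpha$ (with equality exactly when $k=3$). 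Hence $[p^*_k,\alpha]\supseteq[1-\alpha,\alpha]$, and by Theorem \ref{th: all_eqm} parts (1) and (2), both a first-score and a non-first-score equilibrium exist for every $p$ in $[1-\alpha,\alpha]$, yielding the coexistence claim.

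There is no real obstacle beyond the elementary inequality $\bar\alpha^{k-3}\leq\alpha^{k-3}$ and keeping the ranges in Theorem \ref{th: all_eqm} straight; the corollary is essentially a bookkeeping consequence of the prior theorem. The only mild subtlety is verifying that the non-first-score range genuinely excludes $(1-\alpha,\tfrac{1}{2})$ when $k=2$, which follows from the identity $p^*_2=\tfrac{1}{2}$ making the relevant endpoint of $[p^*_2,\alpha]$ match exactly the right endpoint of the parameter window in the corollary statement.
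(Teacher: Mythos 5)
Your proposal is correct and follows essentially the same route as the paper, which presents the corollary as a direct consequence of Theorem \ref{th: all_eqm} via the observation that $p^*_k > \bar{\alpha}$ only when $k=2$; you have simply filled in the elementary verification ($p^*_2 = \tfrac12$, and $p^*_k \le \bar{\alpha}$ for $k \ge 3$ via $\bar{\alpha}^{k-3} \le \alpha^{k-3}$) and the range bookkeeping. No gaps.
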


\Xomit{\textcolor{blue}{
\begin{corollary}
For any $p \in (1-\alpha, \alpha)$, at most three admission outcomes can arise under ``Report All". In particular, for any $2\leq n \leq k$ and $p \in (\max\{p^*_{n-1},1-\alpha\}, p^*_{n})$, ``Report All" has exactly two possible admission outcomes: one is the first-score equilibrium; and the other is that only students who have got score $A$ or $B\underbrace{A \hdots A}_{n-1}$ are admitted.
\end{corollary}
\begin{proof}
Note that $\{p^*_{n}\}_{n= 1}^{k}$ is a partition of $[p^*_{n}, \alpha]$. Thus, by \textit{Theorem \ref{th: all_eqm}}, there are at least two equilibria in the range $[\max\{p^*_{k},\bar{\alpha}\}, \alpha]$: the first-score equilibrium and some non-first-score equilibrium. Fix any $p \in [\max\{p^*_{n-1},\bar{\alpha}\}, p^*_{n}]$. Student who takes the test for the first time and gets $A$ can stop and be admitted. Now we focus on what score sequence can also be admitted if the first score is B. For a score starting in $BB$ to be admitted, by \textit{Lemma \ref{lem:conti}}, we need
\end{proof}
To see why $p \in [p^*_k, \alpha]$ is also \textit{sufficient} for some subsequent score after $B$ to matter in the admission outcome, define a partition $\{{p}^*_n\}_{n = 1}^{k}$ over the range $[p^*_k, \alpha]$ in which ${p}^*_n = \frac{\bar{\alpha}^{n-2}}{\alpha^{n-2} + \bar{\alpha}^{n-2}}$ for all $n = 1, \cdots, k$ \footnote{Note that ${p}^*_1 = \alpha$, ${p}^*_2 = \frac{1}{2}$, $\cdots$, $p^*_k \equiv \frac{\bar{\alpha}^{k-2}}{\alpha^{k-2} + \bar{\alpha}^{k-2}}$, and $\cup_{n=1}^{k} [p^*_n, p^*_{n+1}] = [p^*_k,\alpha]$.}. If $p \in [p^*_{n-1}, p^*_{n}]$, there is a non-first-score-based equilibrium in which the College accepts students with first score $A$, and rejects all students with first score $B$ unless their sequence of scores is  $B\underbrace{A \hdots A}_{n-1}$. In this equilibrium, Category 2 students take at least two tests if their first score is $B$, and both score sequences starting with $A$ and those realizing $B\underbrace{A \hdots A}_{n-1}$ yield admission. 
}
}

\subsection{Comparisons}


Once again, it is straightforward to compare the false positive and false negative rates of the separating equilibria under `Report Max''  with the first-score equilibirum under ``Report All'', whenever it exists. The relevant parameter range is $p \in [\hat{p}_{k}, \hat{p}'_{k}]\cap [\bar{\alpha}, \alpha] = [\hat{p}_{k}, \alpha]$ which is nontrivial for all $k\geq 2$ since $\hat{p}_{k} < \frac{1}{2} < \alpha$. We summarize the results in the tables below and we can see that in a corresponding parameter range in which $p \in [\hat{p}_{k}, \alpha]$, ``Report All" achieves parity across categories in false positive and false negative rates, whereas for ``Report Max'', there is necessarily a discrepency between false positive rates and false negative rates---in favor of the advantaged (Category 2) students---which increases as the number of tests $k$ available to the advantaged population increases.

\begin{table}[H]
    \centering
    \begin{tabular} {|c|c|c|c|} \hline
 & $\bm{(1,H)}$  & $\bm{(2,H)}$  \\ \hline
 \textbf{Max} & $1-\alpha$  &  $(1-\alpha)^k$     \\ \hline
\textbf{All } & $1-\alpha$ &   $1-\alpha$   \\ \hline
\end{tabular}
\quad
   \begin{tabular} {|c|c|c|c|} \hline
 & $\bm{(1,L)}$  & $\bm{(2,L)}$  \\ \hline
 \textbf{Max} & $1-\alpha$  &  $1-\alpha^k$     \\ \hline
\textbf{All} & $1-\alpha$ &   $1-\alpha$   \\ \hline
\end{tabular}
    \caption{\label{tab:FNFP_multi} \footnotesize False Negative (left) and False Positive (right) Rates with k Tests Available [Using First-Score Equilibrium]} 
\end{table}
\normalsize

What about equilibria other than the first-score equilibrium? Here, students from Category 2 again have an advantage, but we can quantify the \emph{degree} of that advantage by measuring the disparity between false positive rates and false negative rates between Category 1 and Category 2 students. What we can show is that this disparity is always lower under any equilibrium of ``Report All'' compared to ``Report Max'' -- and hence, while \emph{some} inequity may remain under ``Report All'', it is  reduced (always weakly, and strictly if either $k > 3$ or $p < \frac{1}{2}$) compared to ``Report Max''. 

We proceed by comparing the unique nontrivial ``Report Max'' equilibrium to an equilibrium under ``Report All''. First note that under both ``Report All" and ``Report Max", a single $A$ yields admission while a single $B$ yields rejection when $p \in [\hat{p}_k, \alpha]$. Therefore, the false positive and false negative rates remain the same for Category 1 students across both admissions policies. Next, we note that under any equilibrium under ``Report All'', a score sequence consisting entirely of $B$'s results in rejection --- and hence the admissions probability for Category 2 students (regardless of their type) can only be \emph{smaller} under Report All than under Report Max --- since any other sequence of scores would lead to admission under Report Max, but \emph{possibly} not under Report All. Finally, we observe that if either $k > 3$ or $p < 1/2$, there is \emph{some} sequence of scores that leads to rejection under Report All, but not under Report Max, showing that the probability of admission for Category 2 students is \emph{strictly} lower under Report All for both types, which implies the corresponding reduction in false positive and false negative disparities. To see, consider that if this does not hold, the College must accept students with the length-$k$ score sequence $\hat{s} = B\hdots B A$ while rejecting all score sequences $s \in \cup_{i = 1}^{k}\{B\}^{i}$. Under this admissions policy, Category 2 students who have only received $B$'s thus far would continue to take the test (up to $k$ times) and thus to rationalize the admissions rule, we must have: $\frac{p \bar{\alpha}^{k-1}\alpha}{p \bar{\alpha}^{k-1}\alpha + \bar{p} \alpha^{k-1}\bar{\alpha}} \geq \frac{1}{2}$, i.e., $p \geq \frac{\alpha^{k-1} \bar{\alpha}}{\alpha^{k-1} \bar{\alpha} + \bar{\alpha}^{k-1} \alpha} \geq \frac{1}{2}$. Note that a nontrivial equilibrium under ``Report All" exists only if $p \leq \alpha$, and $\frac{\alpha^{k-1} \bar{\alpha}}{\alpha^{k-1} \bar{\alpha} + \bar{\alpha}^{k-1} \alpha} \leq \alpha$ only when $k\in \{2,3\}$. Hence, if $p < \frac{1}{2}$ or $k>3$, there exists at least one score sequence ($\hat{s}$) which is rejected  under ``Report All'' but is admitted under ``Report Max''.

As in the $k=2$ case, we similarly find that in the first-score  equilibrium under ``Report All'', the positive predictive value is strictly higher, compared to the separating equilibrium under ``Report Max'' --- i.e. the admitted class has a higher proportion of High types. 
 \begin{theorem}\label{th:PPV_multi}
    For any $\alpha \in (\frac{1}{2}, 1)$, the positive predictive value in the first-score equilibrium under the ``Report All'' policy exceeds that under the ``Report Max'' separating equilibrium.
    \end{theorem}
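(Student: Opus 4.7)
The plan is to generalize the proof of Theorem \ref{th:PPV_2} by expressing both PPVs explicitly, reducing the comparison to TP/FP ratios, and then distilling the whole question down to a single inequality in $\alpha$ that can be handled by the binomial theorem.

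First, I would write down the two PPVs. In the first-score equilibrium under Report All, admission depends only on the first test score, whose distribution is identical across categories, so
$$\mathrm{PPV}_{\mathrm{All}} \;=\; \frac{p\alpha}{p\alpha + \bar{p}\bar{\alpha}}.$$
Under the Report Max separating equilibrium, a Category $2$ student of type $H$ (resp.\ $L$) is admitted with probability $1-\bar{\alpha}^k$ (resp.\ $1-\alpha^k$), so
$$\mathrm{PPV}_{\mathrm{Max}} \;=\; \frac{p\bigl[\phi\alpha + \bar{\phi}(1-\bar{\alpha}^k)\bigr]}{p\bigl[\phi\alpha + \bar{\phi}(1-\bar{\alpha}^k)\bigr] + \bar{p}\bigl[\phi\bar{\alpha} + \bar{\phi}(1-\alpha^k)\bigr]}.$$
Since PPV is monotone in the TP/FP ratio, I would compare those ratios instead. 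After clearing the common $p/\bar{p}$ factor and canceling the matched $\phi\alpha\bar{\alpha}$ terms that appear on both sides of the cross-multiplied inequality, $\mathrm{PPV}_{\mathrm{All}} > \mathrm{PPV}_{\mathrm{Max}}$ reduces to
$$\alpha\,(1-\alpha^k) \;>\; \bar{\alpha}\,(1-\bar{\alpha}^k),$$
or equivalently $(\alpha-\bar{\alpha}) > (\alpha^{k+1} - \bar{\alpha}^{k+1})$.

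The main obstacle is that this last inequality is not a straightforward monotonicity statement: the function $x\mapsto x(1-x^k)$ is not monotone on $(0,1)$, so one cannot appeal to $\alpha > \bar{\alpha}$ directly. The trick is to factor
$$\alpha^{k+1} - \bar{\alpha}^{k+1} \;=\; (\alpha - \bar{\alpha})\sum_{i=0}^{k}\alpha^{k-i}\bar{\alpha}^{\,i},$$
so, dividing by $\alpha - \bar{\alpha} > 0$, the inequality becomes
$$\sum_{i=0}^{k}\alpha^{k-i}\bar{\alpha}^{\,i} \;<\; 1.$$
This is then immediate from the binomial theorem: since $\alpha+\bar{\alpha}=1$,
$$1 \;=\; (\alpha+\bar{\alpha})^{k} \;=\; \sum_{i=0}^{k}\binom{k}{i}\alpha^{k-i}\bar{\alpha}^{\,i},$$
and $\binom{k}{i}\ge 1$ for all $i$ with strict inequality for $1 \le i \le k-1$. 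Because $k\ge 2$, that strict range is nonempty and $\alpha,\bar{\alpha}\in(0,1)$ make every term positive, so term-by-term comparison gives the desired strict inequality. Tracing the reductions backward yields $\mathrm{PPV}_{\mathrm{All}} > \mathrm{PPV}_{\mathrm{Max}}$, which specializes to the $k=2$ computation in Theorem \ref{th:PPV_2}.
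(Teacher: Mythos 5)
Your proposal is correct and follows the same route as the paper: both write down $\mathrm{PPV}_{\mathrm{All}} = \frac{p\alpha}{p\alpha+\bar p\bar\alpha}$ and $\mathrm{PPV}_{\mathrm{Max}}$ from the separating-equilibrium admission probabilities and compare them directly; the paper simply declares the comparison ``straightforward to verify,'' whereas you actually carry it out, reducing it to $\alpha(1-\alpha^k) > \bar\alpha(1-\bar\alpha^k)$ and settling that via the factorization $\alpha^{k+1}-\bar\alpha^{k+1}=(\alpha-\bar\alpha)\sum_{i=0}^{k}\alpha^{k-i}\bar\alpha^{i}$ together with $\sum_{i=0}^{k}\alpha^{k-i}\bar\alpha^{i} < (\alpha+\bar\alpha)^k = 1$, which is a clean and valid way to handle the non-monotonicity of $x\mapsto x(1-x^k)$. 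The only caveat worth recording is that strictness requires $\bar\phi>0$ (if $\phi=1$ the two PPVs coincide), an assumption the paper makes implicitly since Category 2 students are presumed to exist.
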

    \begin{proof} 
    The positive predictive value of ``Report Max'' is 
$
 \frac{\alpha \phi p + (1-\bar{\alpha}^k)\bar{\phi} p}{\alpha \phi p  + \bar{\alpha} \phi \bar{p} +(1- \bar{\alpha}^k)\bar{\phi}p+ (1-\alpha^k)\bar{\phi}\bar{p} }.$ The positive predictive value  of the first-score equilibrium under ``Report All" is $\frac{\alpha p}{\alpha p + \bar{\alpha}\bar{p}}.$ It is straightforward to verify that the first term is strictly smaller than the second if $\alpha \in (0.5, 1)$.
\end{proof}

Similarly, as before, the negative predictive value is higher under ``Report Max'':

 \begin{theorem}
    For any $\alpha \in (\frac{1}{2}, 1)$, the negative predictive value in the first-score equilibrium under the ``Report All'' policy is  strictly smaller than that of the ``Report Max'' separating equilibrium.
    \end{theorem}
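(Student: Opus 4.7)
The approach mirrors the proof of the analogous $k=2$ statement. The plan is to write down each negative predictive value as a ratio of (mass of rejected Low types) over (total mass of rejected students), using the fact that under the first-score equilibrium of ``Report All'' the only thing that matters is the first test outcome, and under the separating equilibrium of ``Report Max'' a student is rejected precisely when every test they chose to take returned $B$ (so a Category 2 student of type $H$ is rejected with probability $\bar\alpha^{k}$ and a Category 2 student of type $L$ is rejected with probability $\alpha^{k}$).

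Concretely, I would first assemble the two expressions
\begin{align*}
\text{NPV}_{\text{Max}} &= \frac{\phi\bar p\alpha + \bar\phi\bar p\alpha^{k}}{\phi\bar p\alpha + \phi p\bar\alpha + \bar\phi\bar p\alpha^{k} + \bar\phi p\bar\alpha^{k}},\\
\text{NPV}_{\text{All}} &= \frac{\bar p\alpha}{\bar p\alpha + p\bar\alpha},
\end{align*}
where the ``Max'' numerator uses the Report-Max rejection probabilities from \textit{Table \ref{tab:prob_maxtest_sep}} generalized to $k$ tests, and the ``All'' expression follows because, in the first-score equilibrium, rejection is equivalent to a single $B$ on the first attempt (so Category membership drops out and the ratio is identical to the $k=2$ case computed earlier).

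The inequality $\text{NPV}_{\text{Max}} > \text{NPV}_{\text{All}}$ is then equivalent, after cross-multiplying by the (positive) denominators, to
\begin{equation*}
(\phi\bar p\alpha + \bar\phi\bar p\alpha^{k})(\bar p\alpha + p\bar\alpha) \;>\; \bar p\alpha\bigl(\phi\bar p\alpha + \phi p\bar\alpha + \bar\phi\bar p\alpha^{k} + \bar\phi p\bar\alpha^{k}\bigr).
\end{equation*}
Expanding both sides, the terms involving $\phi$ cancel, as do $\bar\phi\bar p^{2}\alpha^{k+1}$; what survives on each side is exactly one cross term, and the difference reduces to
\begin{equation*}
\bar\phi\,\bar p\,p\,\alpha\,\bar\alpha\,\bigl(\alpha^{k-1} - \bar\alpha^{k-1}\bigr),
\end{equation*}
which is strictly positive because $\alpha>\tfrac12>\bar\alpha$ and $k\geq 2$. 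I do not anticipate any real obstacle — the main step is just the cancellation above, and the strict inequality comes for free from $\alpha>\bar\alpha$ together with $\bar\phi,p,\bar p\in(0,1)$. The only delicate point worth flagging in the write-up is to check that we are comparing objects on a common parameter range, namely $p\in[\hat p_{k},\alpha]$ where both equilibria exist as characterized by \textit{Theorem \ref{th: max_multi_eqm}} and \textit{Theorem \ref{th: all_eqm}}; once that is noted, the one-line algebraic reduction closes the argument.
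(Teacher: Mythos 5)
Your proposal is correct and follows the same route as the paper: you write down the identical two expressions for the negative predictive values and then verify the inequality by cross-multiplication, which the paper leaves as "straightforward to verify." Your explicit cancellation showing the difference reduces to $\bar\phi\,\bar p\,p\,\alpha\,\bar\alpha\,(\alpha^{k-1}-\bar\alpha^{k-1})>0$ is exactly the intended computation.
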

    \begin{proof}
The negative predictive value of ``Report Max'' is
$\frac{\alpha \phi \bar{p} + \alpha^k\bar{\phi}\bar{p}}{\bar{\alpha}\phi p + \alpha \phi \bar{p} + \bar{\alpha}^k\bar{\phi}p + \alpha^k\bar{\phi}\bar{p}}.$ The negative predictive value of the first-score equilibrium under``Report All'' is
$\frac{\bar{p}\alpha}{p\bar{\alpha} + \bar{p}\alpha}.$ It is straightforward to verify that the first term is larger than the second for $\alpha \in (\frac{1}{2},1)$.
\end{proof}

Finally, we recover that even in the general case, the College's utility is strictly higher under the ``Report All'' policy  compared to ``Report Max'' --- and so once again, the ``Report All'' policy is preferable not just from the perspective of equity, but also from the selfish perspective of the College. First, we prove a lemma that establishes this specifically for the first-score equilibrium of ``Report All'' --- but we use this Lemma to prove \textit{Theorem \ref{thm:gen-payoff}} below, which establishes the result for \emph{all} equilibria under ``Report All''. 

\begin{lemma}\label{th: payoff_multi}
   Let $p^{**}_k = \frac{\alpha - \alpha^k}{1 - \alpha^k - \bar{\alpha}^k} \in [\frac{1}{2}, \alpha)$ For any $\alpha \in (\frac{1}{2}, 1)$, $p < p^{**}_k$, the College's expected payoff in the first score equilibrium under the ``Report All'' policy exceeds what it is under the ``Report Max'' separating equilibrium. The expected payoff gap increases with the number of tests $k$ available to Category 2 students.
\end{lemma}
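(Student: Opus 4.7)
The plan is to compute the College's per-student expected payoff in closed form under each of the two equilibria, subtract them, and show that the resulting difference $G(k)$ is positive exactly when $p<p^{**}_k$ and is monotonically increasing in $k$ throughout this range.

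I would begin by noting that in the first-score equilibrium under ``Report All'' admission depends only on the first (or sole) score, and since the marginal distribution of a single score conditional on type is identical across categories, the per-student payoff collapses to the single-test expression $p\alpha-\bar p\bar\alpha$. For ``Report Max'' with $k$ tests, \textit{Theorem \ref{th: max_multi_eqm}} implies Category 1 students of type $H$ (resp.\ $L$) are admitted with probability $\alpha$ (resp.\ $\bar\alpha$), while Category 2 students of type $H$ (resp.\ $L$) test until they obtain an $A$ and are admitted with probability $1-\bar\alpha^k$ (resp.\ $1-\alpha^k$). This gives a Report Max payoff of $\phi(p\alpha-\bar p\bar\alpha)+\bar\phi[p(1-\bar\alpha^k)-\bar p(1-\alpha^k)]$, and subtracting the two expressions simplifies (using $-p\bar\alpha+\bar p\alpha$ for the $\phi$-free remainder) to
\[
G(k)=\bar\phi\bigl[\bar p(\alpha-\alpha^k)-p(\bar\alpha-\bar\alpha^k)\bigr].
\]

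Next I would verify that $G(k)>0 \iff p<p^{**}_k$. Grouping the $p$-terms yields $\alpha-\alpha^k>p\bigl[(\alpha-\alpha^k)+(\bar\alpha-\bar\alpha^k)\bigr]=p(1-\alpha^k-\bar\alpha^k)$, which rearranges to $p<p^{**}_k$. To show $p^{**}_k\in[\tfrac12,\alpha)$, the upper bound $p^{**}_k<\alpha$ reduces, after cross-multiplication, to $\alpha\bar\alpha^k<\alpha^k\bar\alpha$, i.e.\ $\bar\alpha^{k-1}<\alpha^{k-1}$, which holds since $\alpha>\bar\alpha$ and $k\ge 2$. For the lower bound, $p^{**}_k\ge\tfrac12$ reduces to $2\alpha-1\ge\alpha^k-\bar\alpha^k$. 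Here I would record the key monotonicity fact that $g(k)\equiv\alpha^k-\bar\alpha^k$ satisfies $g(k+1)-g(k)=\alpha\bar\alpha(\bar\alpha^{k-1}-\alpha^{k-1})\le 0$ for $\alpha>\tfrac12$ and $k\ge 1$, so $g(k)\le g(1)=\alpha-\bar\alpha=2\alpha-1$ for every $k\ge 1$, which gives the bound with equality at $k=2$.

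Finally, for monotonicity of the payoff gap in $k$, I would compute
\[
G(k+1)-G(k)=\bar\phi\bigl[\bar p\,\alpha^k(1-\alpha)-p\,\bar\alpha^k(1-\bar\alpha)\bigr]=\bar\phi\,\alpha\bar\alpha\bigl[\bar p\,\alpha^{k-1}-p\,\bar\alpha^{k-1}\bigr].
\]
This is positive iff $p<q_k\equiv\alpha^{k-1}/(\alpha^{k-1}+\bar\alpha^{k-1})$. Since $q_2=\alpha$ and $q_k$ is strictly increasing in $k$ (because $\alpha>\bar\alpha$ implies $(\bar\alpha/\alpha)^{k-1}$ is strictly decreasing), we have $q_k\ge\alpha$ for every $k\ge 2$, and the hypothesis $p<p^{**}_k<\alpha\le q_k$ then gives $G(k+1)>G(k)$ as desired. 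The main obstacle is really just the bookkeeping of showing $p^{**}_k\in[\tfrac12,\alpha)$ uniformly in $k$; both endpoints follow from the single observation that $g(k)=\alpha^k-\bar\alpha^k$ is non-increasing in $k$ for $\alpha>\tfrac12$, so once that monotonicity is in hand the remainder is routine algebra.
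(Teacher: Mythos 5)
Your proof is correct and follows essentially the same route as the paper: compute the per-student payoff under each equilibrium, form the difference $\bar\phi\bigl[\bar p(\alpha-\alpha^k)-p(\bar\alpha-\bar\alpha^k)\bigr]$, and read off the positivity threshold $p^{**}_k$. You additionally supply two things the paper's proof leaves implicit — the verification that $p^{**}_k\in[\tfrac12,\alpha)$ and the explicit increment computation $G(k+1)-G(k)=\bar\phi\,\alpha\bar\alpha\,(\bar p\,\alpha^{k-1}-p\,\bar\alpha^{k-1})$ establishing monotonicity in $k$ (the paper only gestures at this via the telescoping-sum form $\bar\phi\,\alpha\bar\alpha\sum_{i=0}^{k-2}(\alpha^i\bar p-\bar\alpha^i p)$), and both of these checks are correct.
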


\begin{proof}
The expected payoff per student under the first score  equilibrium for ``Report All'' is $\alpha p - \bar{\alpha}\bar{p}$. The expected payoff per student under ``Report Max'' is
$\phi(\alpha p - \bar{\alpha}\bar{p}) + \bar{\phi}[(1-\bar{\alpha}^k)p - (1-\alpha^k)\bar{p}].$ Hence, the difference between the College's expected payoff under these two schemes is:
\begin{align*}
&\alpha p - \bar{\alpha}\bar{p} - \{\phi(\alpha p - \bar{\alpha}\bar{p}) + \bar{\phi}[(1-\bar{\alpha}^k)p - (1-\alpha^k)\bar{p}]\}   \\
=\quad &\bar{\phi}[\alpha p - \bar{\alpha}\bar{p} -  (1-\bar{\alpha}^k)p + (1-\alpha^k)\bar{p}]\\
=\quad &\bar{\phi} \alpha \bar{\alpha} \sum_{i =0}^{k-2}(\alpha^i \bar{p} - \bar{\alpha}^i p) =  \bar{\phi} [(\alpha - \alpha^k) \bar{p} - (\bar{\alpha} - \bar{\alpha}^k) p] > 0 
\end{align*}
where the last inequality holds when $p < \frac{\alpha - \alpha^k}{1 - \alpha^k - \bar{\alpha}^k} = p^{**}_k$ and $\alpha > \frac{1}{2}$.
\end{proof}

We can strengthen \textit{Lemma \ref{th: payoff_multi}} by using it to prove that for \emph{every} nontrivial equilibrium under ``Report All'', the College has higher expected payoff  compared to ``Report Max''. Hence, the College has an incentive to prefer this policy even without the ability to perform equilibrium selection.

\begin{theorem}
\label{thm:gen-payoff}
    For any $\alpha \in (\frac{1}{2}, 1)$, $p < p^{**}_k$\footnote{Note that $p^{**}_k= \frac{\alpha - \alpha^k}{1 - \alpha^k - \bar{\alpha}^k} \in [\frac{1}{2}, \alpha)$ strictly increases in $k$ and $\lim_{k \rightarrow \infty} p^{**}_k = \alpha$, so the range of interest expands as the number of tests available increase and in the limit, the results in \textit{Theorem \ref{thm:gen-payoff}} hold in every nontrivial equilibrium comparison for $p \in (0,1)$. Moreover, we may want to capture the idea that talent is scarce by further assuming that $p < \frac{1}{2} \leq p^{**}_k$. In this case, findings in \textit{Theorem \ref{thm:gen-payoff}} are valid.}, the College obtains strictly higher utility under the ``Report All" policy compared to the``Report Max" policy in \emph{every} nontrivial equilibrium. 
    
    Additionally, amongst the ``Report All" equilibria, the College has a higher expected payoff under equilibria in which the admissions outcome depends on more than the first reported score.
\end{theorem}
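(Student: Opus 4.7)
The plan is to prove Theorem \ref{thm:gen-payoff} by establishing the intermediate claim that the College's expected payoff in \emph{any} nontrivial Report All equilibrium is at least as high as in the first-score Report All equilibrium. Combined with Lemma \ref{th: payoff_multi}, which already shows that the first-score Report All equilibrium strictly beats Report Max for $p < p^{**}_k$, this yields the first assertion of the theorem; the second assertion (non-first-score equilibria dominate the first-score equilibrium within Report All) falls out of the same argument.

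Fix any nontrivial Report All equilibrium. By Theorem \ref{th: all_eqm}, under any such equilibrium a single reported score of $A$ yields admission ($u_A = 1$) and any sequence consisting entirely of $B$'s yields rejection. I will write the College's expected per-student payoff as $p\cdot a_H - \bar{p}\cdot a_L$, where $a_H, a_L$ denote the type-conditional admission probabilities. Since $u_A = 1$, Category $1$ students contribute identically across all Report All equilibria. Category $2$ students optimally stop after a first score of $A$, or are indifferent and still admitted with probability $1$ when every $A$-continuation is admitted. Hence their admission probabilities can be written as $\alpha + \bar{\alpha}\,q_{H|B}$ and $\bar{\alpha} + \alpha\,q_{L|B}$, where $q_{\cdot|B}$ denotes the equilibrium probability of eventual admission conditional on a first score of $B$.

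The key step is to sign the quantity $p\bar{\alpha}\,q_{H|B} - \bar{p}\alpha\,q_{L|B}$, which measures the additional payoff per Category $2$ student relative to the first-score equilibrium. Every admitted sequence $Bs$ with $|Bs|\geq 2$ can only be generated by Category $2$ students (Category $1$ submits at most one score), so the rationality condition $p_{Bs}\geq \tfrac{1}{2}$ reduces to $p\cdot \Pr[Bs \mid 2,H] \geq \bar{p}\cdot \Pr[Bs \mid 2,L]$. Summing this inequality over all admitted sequences starting with $B$ (using that $u_B = 0$, so no length-$1$ $B$'s contribute), the left side telescopes to $p\bar{\alpha}\,q_{H|B}$ and the right side to $\bar{p}\alpha\,q_{L|B}$, giving $p\bar{\alpha}\,q_{H|B} \geq \bar{p}\alpha\,q_{L|B}$. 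Consequently the additional Category $2$ contribution $\bar{\phi}(p\bar{\alpha}\,q_{H|B} - \bar{p}\alpha\,q_{L|B})$ is non-negative, proving the intermediate claim. Strict inequality holds whenever some admitted continuation of $B$ has posterior strictly above $\tfrac{1}{2}$, which is the generic case for the non-first-score equilibria identified in Theorem \ref{th: all_eqm}.

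The main obstacle is the careful handling of the ``stop after first $A$'' step, since Category $2$ students may be indifferent when $u_A = u_{AA} = u_{AB} = 1$. I would resolve this by noting that if any continuation after first $A$ were rejected, stopping would be strictly optimal (so the Category $2$ first-$A$ admission probability equals $1$), while if all continuations were admitted, the admission probability is again $1$ regardless of the stopping choice, and posteriors on single-$A$ submissions are unchanged because the marginal distribution of types among Category $2$ students coincides with that among Category $1$. Chaining the intermediate claim with Lemma \ref{th: payoff_multi} yields the first assertion, and the strict version of the telescoping inequality above yields the second.
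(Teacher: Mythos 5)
Your proposal is correct and follows essentially the same route as the paper: reduce to showing that any nontrivial Report All equilibrium weakly dominates the first-score equilibrium, and obtain that domination by summing the College's rationality condition $p_s \ge \tfrac{1}{2}$ over the admitted length-$\ge 2$ sequences (the paper writes this as $\sum_{s\in S} q_s(2p_s-1)\ge 0$, which is exactly your inequality $p\bar{\alpha}\,q_{H|B} \ge \bar{p}\alpha\,q_{L|B}$ after the change of bookkeeping), then chain with Lemma \ref{th: payoff_multi}. Your explicit treatment of the stop-after-first-$A$ indifference and of when the inequality is strict is slightly more careful than the paper's, but the argument is the same.
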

\begin{proof}
 The statement holds vacuously true for $p < \hat{p}_k \in (0, p_k^{**})$ since there only exists trivial equilibrium of rejecting all under ``Report Max". Recall that when $p \in [\hat{p}_{k}, p_k^{**}) \subset (1-\alpha, \alpha)$, a student whose first (or only) score is $A$ gets admission under both schemes. Therefore, the only difference between equilibria under ``Report All" and ``Report Max" consists in how they treat students whose first (or only) score is $B$. By Lemma \ref{th: payoff_multi}, we know already that for the first score equilibrium under ``Report All", the College strictly prefers ``Report All" to ``Report Max". 

Therefore, it remains to study the case in which the equilibrium admissions outcome depends on more than the first score. Fix any such an equilibrium and define $S$ to be the set of score sequences of length $\geq 2$ that lead to admissions\footnote{The set $S$ is nonempty. Otherwise we are back to the case in which only the first score matter for the admissions outcome.}:
$$ S \equiv \{s \in \cup_{i = 2}^{k} \{A,B\}^i | u_s = 1\}.$$
For any $s \in S$, denote the fraction of students who obtain score $s$ in the equilibrium by $q_s \in [0,\bar{\phi}]$. Then the expected payoff for the college under this equilibrium of the ``Report All'' policy is:
$$\underbrace{\alpha p - \bar{\alpha} \bar{p}}_{\text{$A$ yields admission}} + \underbrace{\sum_{s \in S} q_s [p_s + (-1) (1-p_s)]}_{\text{Longer score sequences yield admission}}.$$
For the College to admit $s \in S$, we need $p_s \geq \frac{1}{2}$, and thus the last term $\sum_{s \in S} q_s [p_s + (-1) (1-p_s)] = \sum_{s \in S} q_s (2 p_s -1) \geq 0$. Note that the first two terms are the expected payoff for the College in the first-score equilibrium. Therefore, the College prefers a non-first-score equilibrium than the first-score equilibrium under ``Report All", which in turn is strictly better than the separating equilibrium under ``Report Max" by \textit{Lemma \ref{th: payoff_multi}}.

\end{proof}

\section{Discussion}
Allowing students to retake standardized tests and report only the best scores obtained---a currently common practice known as ``super-scoring''---clearly gives an advantage to well-resourced students who have the ability to take the test multiple times. A natural fix would seem to be to require that all students take the exam only once, thereby enforcing equity---but for various reasons, including that tests are administered by third party entities with their own interests, and that different colleges have different admissions policies, this seems unworkable.  \textit{A priori}, the effects of a traditional  alternative---requiring students to report all of their scores --- are  less transparent. This is because it seems to still gives well-resourced students an advantage, as a population: it provides the option for the more talented students to report a more \emph{accurate} signal (by taking the exam several times), while allowing the less talented students to pool with the lower-resourced students by taking the exam only once, thereby providing a less accurate signal and an increased chance of admissions. 

Nevertheless, we show that \emph{in equilibrium}, the traditional policy of requiring that all scores be reported has the same effect as enforcing that students take the exam only once.  Moreover, this policy is preferable to super-scoring, both from the perspective of equity---in the ``Report All'' equilibrium, the chance that a student is admitted is independent of their population, conditional on their type---but also from the perspective of the college. This represents an unusual but important situation in which goals of accuracy and equity are in alignment. 

\paragraph*{Acknowledgements}
We thank Chris Jung, Changhwa Lee, and Mallesh Pai for helpful conversations at an early stage of this work. We gratefully acknowledge support from NSF grants CCF-1763307 and CCF-1763349 and the Simons Collaboration on the Theory of Algorithmic Fairness.

\bibliographystyle{alpha}
\bibliography{refs}

\appendix


\Xomit{
If $p_{BA*} = p \geq \frac{1}{2}$, then by \textit{Law of Total Probability}, there exist some $s$ such that $p_{BAs}\geq \frac{1}{2}$, i.e. $BAs$ gives admission. In this case, $\frac{p \bar{\alpha} \alpha^{k-1}}{p \bar{\alpha} \alpha^{k-1} + \bar{p} \alpha \bar{\alpha}^{k-1}} > p \geq \frac{1}{2}$.

If, otherwise, $p_{BA*} < \frac{1}{2}$, then $BA$ gives rejection. To see why, suppose $BA$ gives admission. If a score starting in $BA$ always gives admission, then by \textit{Law of Total Probability}, $p_{BA*} \geq \frac{1}{2}$, a contradiction to the assumption $p_{BA*} < \frac{1}{2}$. If some score starting in $BA$ gives rejection, then an student would not risk getting that score. Therefore, in equilibrium, an student whose score starts in $BA$ is admitted, and thus $p_{BA*} \geq \frac{1}{2}$, again a contradiction. 
\begin{lemma}
For admission outcome to rely on subsequent score under ``Report All", $p_{B \underbrace{\scriptstyle A\cdots A}_{m}} \geq \frac{1}{2}$ for some $m \in \mathbbm{N}, 1\leq m \leq k-1$. 
\end{lemma}
\begin{proof}
Suppose, by contradiction, $p_{B \underbrace{\scriptstyle A\cdots A}_{m}} < \frac{1}{2}$ for all $m \in \mathbbm{N}, 1\leq m \leq k-1$. Hence, the College reject students with only score $A$ following the first score $B$. If $BB$ gives admission, and any score starting in $BB$ also gives admission, then by law of total probability, $p_{BB*} \geq \frac{1}{2}$, a contradiction to $p_{BB*} < p_{B} \leq \frac{1}{2}$. If $B\underbrace{B\hcdots B}_{n}$ gives admission ($2 \leq n \leq k-2$), and $B\underbrace{B\hcdots B}_{n}$ gives rejection, then no one gets score $BB$ would take risks to take
\end{proof}
}

\Xomit{\section{Equilibrium Characterization with First Score as $B$ and $k=2$}
In this appendix, we check the case when the student's first test score is $B$ and the beliefs following that, namely, $(p_B, p_{BA}, p_{BB})\in [0,1]^3$. By Table \ref{tab:prob_alltests}, we have
\begin{align}
    p_B &= \frac{\phi p \bar{\alpha} + \bar{\phi} p \bar{\alpha} f_H(B)}{\phi(p \bar{\alpha} + \bar{p}\alpha) + \bar{\phi}( p \bar{\alpha} f_H(B) + \bar{p}\alpha f_L(B))} \in (0, 1), \label{p_B}\\
    p_{BA} &= \frac{p \bar{f}_H(B)}{p \bar{f}_H(B) + \bar{p}\bar{f}_L(B)} \quad \quad \quad \, \text{if $\, p \bar{f}_H(B) + \bar{p}\bar{f}_L(B)>0$}, \label{p_BH}\\
    p_{BB} &= \frac{p \bar{\alpha}^2 \bar{f}_H(B)}{p \bar{\alpha}^2 \bar{f}_H(B) + \bar{p}\alpha^2\bar{f}_L(B)} \quad \text{if $\, p \bar{\alpha}^2 \bar{f}_H(B) + \bar{p}\alpha^2\bar{f}_L(B)>0$}. \label{p_BL}
\end{align}
Note that $p_{BB}<p_{BA}$ if $f_L(B) <1$ and $f_H(B) <1$. There are three cases depending on the relationship between $u_{BA}$ and $u_{BB}$.
\begin{enumerate}
    \item $u_{BB} = u_{BA} = u \in \{0,1\}$.\\
    Then we have $\pi_B^L = \pi_B^H = u$. According to the relationship between $u_{B}$ and $u$, we need to discuss the following three cases:
    \begin{enumerate}
        \item If $u_{B} > u$, then we have $u_{B} = 1, u = 0$. Again, one can discard this case by arguing that if this were true, a student would have an incentive to intentionally do badly on the first test. A reader content with this justification can skip this case.
        \\
        Observe that a category 2 student stops once they get $B$ in their first test, i.e., $f_L(B) = f_H(B) = 1$. By Equation (\ref{p_B}), we have $p_B = \frac{p\bar{\alpha}}{p\bar{\alpha} + \bar{p}\alpha} \geq \frac{1}{2}$, i.e., $p\geq \alpha$, which cannot be.
        %
        \item If $u_{B} < u$, then we have $u_{B} = 0, u = 1$. Thus, category 2 students continue to take the second test after they get $B$ in their first test, i.e., $f_L(B) = f_H(B) = 0$. and thus by equation (\ref{p_B}), (\ref{p_BH}) and (\ref{p_BL}), we have $p_B = \frac{p\bar{\alpha}}{p\bar{\alpha} + \bar{p}\alpha} \leq \frac{1}{2}, p_{BA} = p \geq \frac{1}{2}, p_{BB} = \frac{p \bar{\alpha}^2}{p \bar{\alpha}^2 + \bar{p} \alpha^2} \geq \frac{1}{2}$. $p_{BB} \geq \frac{1}{2}$ implies $p \geq \frac{\alpha^2}{\alpha^2 + \bar{\alpha}^2} > \alpha$, a contradiction to $p_B \leq \frac{1}{2}$ (i.e., $p \leq \alpha$). So, this cannot be either.
        \item If $u_{B} = u$, 
        \begin{enumerate}
            \item if $u = 1$, we can discard this case by arguing that if this were true, a student would have an incentive to intentionally do badly on the first test. A reader content with this justification can skip this case. Otherwise,
            there are two possible cases:
            \begin{enumerate}
                \item if $f_L(B) = f_H(B) = 1$, we have $p \geq \alpha$ which cannot be.
                \item  otherwise, we have
                 \begin{align*}
               p_B = \frac{\phi p \bar{\alpha} + \bar{\phi} p \bar{\alpha} f_H(B)}{\phi(p \bar{\alpha} + \bar{p}\alpha) + \bar{\phi}( p \bar{\alpha} f_H(A) + \bar{p}\alpha f_L(B))}  \geq \frac{1}{2},\\
     p_{BA} \geq p_{BB} = \frac{p \bar{\alpha}^2 \bar{f}_H(B)}{p \bar{\alpha}^2 \bar{f}_H(B) + \bar{p}\alpha^2\bar{f}_L(B)} \geq \frac{1}{2}.
            \end{align*}
            Since $p_{B}$ strictly increases in $f_H(B)$ and decreases in $f_L(B)$, whereas $p_{BB}$ strictly decreases in $f_H(B)$ and increases in $f_L(B)$, we have $\min\{p_{B}, p_{BB}\} \geq \frac{1}{2}$ if and only if there exists $f_H(B) \in [0,1], f_L(B) \in [0,1]$ such that $p_{B} = p_{BB} \geq \frac{1}{2}$, which in turn requires $f_H(B) < f_L(B)$ and $\frac{p\bar{\alpha}}{p\bar{\alpha} + \bar{p}\alpha} > \frac{1}{2}$, i.e. $p  > \alpha$ which cannot be.
            \end{enumerate}
            \item if $u = 0$, there are two possible cases:
            \begin{enumerate}
                \item if $f_L(B) = f_H(B) = 1$, we have $p \leq \alpha$ and let the beliefs off the equilibrium path be $p_{BA} \leq \frac{1}{2}, p_{BB} \leq \frac{1}{2}$;
                \item otherwise, we have
                 \begin{align*}
              p_B = \frac{\phi p \bar{\alpha} + \bar{\phi} p \bar{\alpha} f_H(B)}{\phi(p \bar{\alpha} + \bar{p}\alpha) + \bar{\phi}( p \bar{\alpha} f_H(B) + \bar{p}\alpha f_L(B))}  \leq \frac{1}{2},\\
   p_{AA} \geq p_{BA} = \frac{p \bar{f}_H(B)}{p \bar{f}_H(B) + \bar{p}\bar{f}_L(B)} \leq \frac{1}{2}.
            \end{align*}
            Since $p_{B}$ strictly increases in $f_H(B)$ and decreases in $f_L(B)$, whereas $p_{BA}$ strictly decreases in $f_H(B)$ and increases in $f_L(B)$, we have $\max\{p_{B}, p_{BA}\} \leq \frac{1}{2}$ if and only if there exists $f_H(B) \in [0,1], f_L(B) \in [0,1]$ such that $p_{B} = p_{BA} \leq \frac{1}{2}$, which in turn requires $f_H(B) > f_L(B)$ and $\frac{p\bar{\alpha}}{p\bar{\alpha} + \bar{p}\alpha} < \frac{1}{2}$, i.e. $p < \alpha$.
            \end{enumerate}
        \end{enumerate}
    \end{enumerate}
    \item $0 = u_{BB} < u_{BA} = 1$.\\
    Then we have $\pi_L^l = \bar{\alpha} < \pi_L^h = \alpha$. According to the relationship between $u_{L}$, $u_{BB}$ and $u_{BA}$, we need to discuss the following two cases:
    \begin{enumerate}
        \item If $u_{B}=1 > \pi_L^h$, then $f_L(B) = f_H(B) = 1$. Again we are in the scenario where no one takes tests twice and thus $p_{L} = \frac{p\bar{\alpha}}{p\bar{\alpha} + \bar{p}\alpha} \geq \frac{1}{2}$,  i.e., $p\geq \alpha$, which cannot be.
        %
        \item If $u_{B} = 0 < \pi_L^l$, then $f_L(B) = f_H(B) = 0$ and thus by equation (\ref{p_B}), (\ref{p_BH}) and (\ref{p_BL}), we have $p_{BA} = p \geq \frac{1}{2}$ which cannot be.
    \end{enumerate}
    \item $1 = u_{BB} > u_{BA} = 0$. \\
     Again, one can discard this case by arguing that if this were true, a student would have an incentive to intentionally do badly on both  tests. A reader content with this justification can skip this case. Otherwise, by Equation (\ref{p_BH}) and (\ref{p_BL}), this can only happen when $f_L(B) = f_H(B) = 1$. Thus, $u_B = 1$, and we have $p_B = \frac{p\bar{\alpha}}{p\bar{\alpha} + \bar{p}\alpha} \geq \frac{1}{2}$, i.e, $p\geq \alpha$ which cannot be.
\end{enumerate}
}

\end{document}